\documentclass[review]{ntupaper}
\usepackage[dvipsnames]{xcolor}
\usepackage{amssymb}
\usepackage{amsthm}
\newtheorem{remark}{Remark}
\newcommand{\Prob}{\mathbb{P}}

\newcommand{\up}{\textcolor{darkred}{$\uparrow$}}
\newcommand{\down}{\textcolor{BlueGreen}{$\downarrow$}}
\usepackage[absolute,overlay]{textpos}
\usepackage{tikz}
\usepackage{hyperref}
\usepackage{url}
\usepackage{threeparttable}
\usepackage{amsthm}
\usepackage[table]{xcolor}
\usepackage{hyperref}
\usepackage{booktabs}
\usepackage{multirow}
\usepackage[table]{xcolor}
\usepackage{fontawesome5}
\usepackage[smartEllipses]{markdown}
\usepackage{pifont}
\usepackage{xcolor}
\usepackage{bbm}
\usepackage[table]{xcolor} % For cell coloring
\usepackage{multirow}   % For multi-row cells
\usepackage{subcaption}
\usepackage{wrapfig}
\usepackage{xspace}
\usepackage{makecell}
\usepackage{wrapfig}
\usepackage{multirow}
\usepackage{graphicx,xcolor,float}
\usepackage{threeparttable}
\usepackage[ruled,vlined]{algorithm2e}
\usepackage{colortbl}
\usepackage{color}
\usepackage{multirow}
\usepackage{tabularx}
\usepackage{float}
\usepackage{tikz}
\usetikzlibrary{shapes.geometric, arrows.meta, positioning, calc, fit, backgrounds}
\usepackage{graphicx}
\usepackage{booktabs}
\usepackage{arydshln}
\usepackage{enumitem}
\usepackage{wrapfig}
\usepackage{caption}
\usepackage{graphicx}
\usepackage{makecell}
\usepackage{float} 
\usepackage{soul}
\usepackage{mathtools}
\usepackage{bbding}
\usepackage{makecell}
\usepackage{tabularx}
\newtheorem{theorem}{Theorem}
\newtheorem{corollary}{Corollary}
\newtheorem{assumption}{Assumption}
\newtheorem{proposition}{Proposition}
\newtheorem{lemma}{Lemma}
\newtheorem{definition}{Definition}
\usepackage{pifont}
\usepackage{xcolor}
\usepackage{bbm}
\usepackage[table]{xcolor} % For cell coloring
\usepackage{multirow}   % For multi-row cells

\usepackage{xspace}
\usepackage{seqsplit}

\usepackage[table,dvipsnames]{xcolor}
\usepackage{makecell}  % provides \Xhline

\usepackage{fontawesome5}

\usepackage{listings}

\tcbuselibrary{skins}

\definecolor{gtgray}{HTML}{F1F3F5}
\definecolor{posgreen}{HTML}{E6F4EA}
\definecolor{negred}{HTML}{FCE8E6}
\definecolor{bestgreen}{HTML}{C8E6C9}
\definecolor{darkred}{RGB}{128,0,32}

\definecolor{slate}{HTML}{536171}
\definecolor{arrowgray}{HTML}{52606D}
\definecolor{plateedge}{HTML}{A9B2BC}
\definecolor{backedge}{HTML}{C6CED6}
\definecolor{backfill}{HTML}{F5F7F8}
\definecolor{paperfill}{HTML}{DDE7EF}
\definecolor{gapfill}{HTML}{E3EFEA}
\definecolor{tealedge}{HTML}{6F9186}
\definecolor{questionfill}{HTML}{E8E4EF}
\definecolor{questionedge}{HTML}{7D718F}
\definecolor{thetafill}{HTML}{F2EBDD}
\definecolor{thetaedge}{HTML}{A79672}
\definecolor{panelfill}{HTML}{FAFAF7}
\definecolor{panelborder}{HTML}{D5D9DD}
\definecolor{notetext}{HTML}{39434D}

\usepackage{xcolor}
\usepackage{enumitem}

\definecolor{promptbg}{HTML}{F8FAFC}
\definecolor{promptframe}{HTML}{CBD5E1}

\newtcolorbox{promptbox}[1]{
  enhanced,
  breakable,
  colback=promptbg,
  colframe=promptframe,
  boxrule=0.45pt,
  arc=2mm,
  left=1.5mm,
  right=1.5mm,
  top=1mm,
  bottom=1mm,
  before skip=0.8em,
  after skip=0.8em,
  title=\textbf{#1},
  fonttitle=\small,
  coltitle=black
}

\usepackage{xcolor}
\usepackage{upgreek}

\usepackage{enumitem}

\usepackage{tikz}

\usetikzlibrary{positioning, arrows.meta}

\definecolor{rqblue}{HTML}{EAF2FF}
\definecolor{rqblueframe}{HTML}{2B5FAB}
\definecolor{gapred}{HTML}{FFF3F0}
\definecolor{gapredframe}{HTML}{B6472F}
\definecolor{ideagreen}{HTML}{EFFAF3}
\definecolor{ideagreenframe}{HTML}{2E7D4F}
\definecolor{softgray}{HTML}{F8FAFC}

\newtcolorbox{examplecard}[2]{
  enhanced,
  breakable,
  colback=#1,
  colframe=#2,
  boxrule=0.45pt,
  arc=2mm,
  left=1.5mm,
  right=1.5mm,
  top=1.2mm,
  bottom=1.2mm,
  before skip=0.5em,
  after skip=0.5em
}

\usepackage{xcolor}
\usepackage{enumitem}
\usepackage{tabularx}
\usepackage{ragged2e}
\usepackage{tikz}
\usetikzlibrary{arrows.meta}

\definecolor{gapbg}{HTML}{FFF8F5}
\definecolor{gapframe}{HTML}{C56A4A}
\definecolor{rqbg}{HTML}{F4F8FF}
\definecolor{rqframe}{HTML}{4A78B8}
\definecolor{ideabg}{HTML}{F5FBF6}
\definecolor{ideaframe}{HTML}{4A8F61}

\newtcolorbox{dapobox}[3]{
  enhanced,
  colback=#1,
  colframe=#2,
  boxrule=0.45pt,
  arc=2mm,
  left=2mm,
  right=2mm,
  top=1.5mm,
  bottom=1.5mm,
  height=#3,
  valign=top
}

\runningtitle{}

\runningstatus{Under review}
\title{$\Sigma$-Mem: An Online Reliability Memory for LLM-based Multi-Agent Systems}

\author[1]{Peilin Feng}
\author[1]{Suorong Yang}
\author[1]{Soujanya Poria}
\affiliation[1]{\small{DeCLaRe Lab, Nanyang Technological University}}
\github{Sigma-Mem}
\correspondence{Suorong Yang (\email{sryang@smail.nju.edu.cn}), Soujanya Poria (\email{soujanya.poria@ntu.edu.sg})}
\definecolor{forestgreen}{RGB}{34,139,34}

\abstract{
Memory is central to long-horizon LLM agents, yet existing memory systems
primarily preserve interaction content rather than modeling which agents can be
trusted and under what conditions. This limitation is particularly important in
multi-agent systems, where a central model may be unable to directly verify
plausible or correlated peer responses. We introduce \textbf{$\Sigma$-Mem}, an
online reliability memory that records \emph{historical competence evidence} for
individual peers and \emph{peer relationship evidence} across the peer set.
Both forms of evidence are maintained as real symmetric states and updated from
post-decision correctness feedback. By Weyl's inequality, the spectral change
caused by each event-level update is bounded, enabling stable online adaptation
without retraining the underlying models. $\Sigma$-Mem provides a general
write-and-read interface: the same memory can be used for residual steering of a
central model, response-free peer routing, or reliability-weighted voting. Across five Qwen-family models, $\Sigma$-Mem adapts to counterfactual
reliability shifts and generalizes to unseen peers and task domains.
Direct memory readouts also outperform majority voting and the best fixed peer
over the full OOD evaluation set. Moreover, performance improves consistently
as more correctness feedback becomes available, indicating that $\Sigma$-Mem
progressively accumulates actionable reliability information. These results
establish reliability memory as a reusable foundation for adaptive coordination
in LLM-based multi-agent systems.
}

\date{\today}

\usepackage{hyperref}

\usepackage{booktabs}
\usepackage{soul}
\usepackage[table]{xcolor}

\crefformat{section}{\S#2#1#3} % see manual of cleveref, section 8.2.1
\crefformat{subsection}{\S#2#1#3}
\crefformat{subsubsection}{\S#2#1#3}

\let\realcite\cite
\renewcommand{\cite}[1]{\ifx.#1.\hl{[?]}\else\realcite{#1}\fi}

\begin{document}

\maketitle

\section{Introduction}
% park2023generativeagents,chen2024reconcile
{Memory has become a pivotal component of LLM-based agents~\citep{zhang2025survey,wu2025human,hu2025memory}.
As tasks grow longer and contexts become bloated, agents rely on external memory to preserve and retrieve relevant information from historical interactions~\citep{packer2023memgpt,deltamem2026}.
Existing memory systems mainly focus on content memory: they store, retrieve, and summarize what was said, observed, or attempted in historical interactions.
This is useful in single-agent settings, where the main challenge is often preserving task-relevant information over long horizons.
However, LLM-based multi-agent systems (MAS) introduce a different memory problem~\citep{cemri2025mas}.
In MAS, a central model coordinates multiple peers~\citep{zhao2025language,khan2024debating, subramaniam2025multiagent, choi2026debate}. Many coordination decisions require judging not only what a peer said, but also whether that peer should be trusted for the current type of task.
Content memory alone does not handle this question: replaying or summarizing past dialogs does not explicitly reveal which peer was correct, on which tasks, under which conditions, or which peers tend to succeed or fail together. 
For example, a peer that is reliable on mathematical reasoning may still fail in retrieval question answering, while another peer that is strong at factual retrieval may fail in multi-step reasoning.
Besides, agreement among multiple peers may reflect shared biases or correlated errors rather than independent evidence.
This calls for reliability memory: a persistent, task-conditioned, peer-specific record of each agent's trustworthiness, continuously updated from past interactions and their outcomes. While content memory answers ``what happened,'' reliability memory answers ``who can be trusted, and when.''
}

{Reliability memory is fundamental to MAS because the central model can not always verify every peer response. Verification may be infeasible when supporting evidence is unavailable, and unreliable when judging an answer is as hard as producing it~\citep{kenton2024scalable, krumdick2025no}. Figure~\ref{fig:introduction} illustrates a typical case: in a RAG task where the supporting context is unavailable to the central model, multiple peers may provide confident and mutually consistent responses. 
Even a strong central model can remain uncertain when candidate answers are all plausible~\citep{simhi2025trust,lin2022truthfulqa} but difficult to verify~\citep{zheng2023judging}. 
% This uncertainty is further amplified when peers trained on similar data or optimized under similar objectives share biases and converge to the same incorrect consensus~\citep{chen2024reconcile}. 
In such cases, agreement among peers may indicate correlated errors rather than corroborating evidence~\citep{gradient2025mas,zhou2026ecl,sharma2024towards,zhou2026epistemic}.
Consequently, MAS failures often occur at trust decisions: for example, when a correct peer response is rejected while an incorrect one is accepted~\citep{cemri2025mas, pitre-etal-2025-consensagent, wynn2025talk}.
When direct verification is infeasible or expensive, the system's performance depends on whether it assigns attention to the right peers.
This is precisely the signal that reliability memory accumulates but content memory does not explicitly model.
}

\begin{figure}[]
\vspace{-0.5em}
\centering
\includegraphics[width=0.9\linewidth]{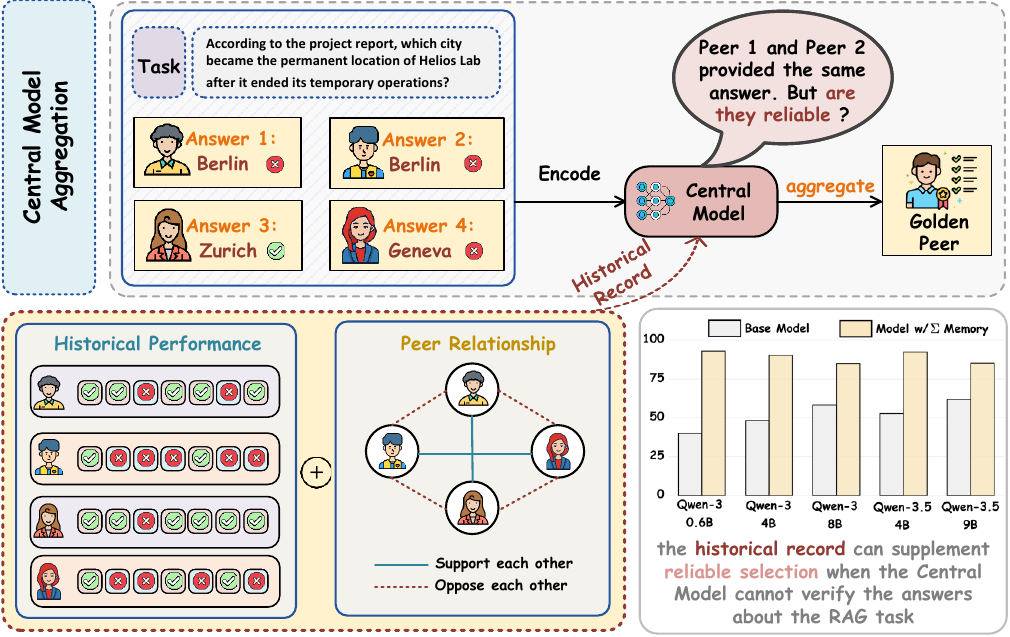}
\caption{Motivation of \textbf{$\Sigma$-Mem}. We hide the supporting context from the RAG task so that the central model cannot directly verify the correctness of the candidate answers. It is insufficient to aggregate peers for the central model when multiple peers provide confident or correlated answers. Historical performance and peer relationship can provide complementary reliability evidence to steer the central model toward better aggregation decisions.}
\label{fig:introduction}
\vspace{-1.5em}
\end{figure}

{In this paper, we propose \textbf{$\Sigma$-Mem}, an online memory mechanism that models and accumulates reliability evidence for LLM-based multi-agent systems. 
Instead of recording only interaction content, \textbf{$\Sigma$-Mem} maintains two complementary forms of reliability evidence.
The first is \textbf{\emph{historical competence evidence}}, which records how trustworthy each peer has been under different task conditions over time.
The second is \textbf{\emph{peer relationship evidence}}, which models how peers relate to each other through their correctness patterns, such as whether they tend to succeed independently or fail together due to shared biases or correlated reasoning errors.
These two forms of evidence play different but complementary roles: historical competence estimates whether an individual peer should be trusted, while peer relationship evidence helps determine whether agreement among peers reflects independent support or correlated failure.}

{Based on these reliability states, \textbf{$\Sigma$-Mem} defines a general write-and-read interface for trust-aware MAS memory. 
The write interface uses external correctness feedback to update the historical competence states $\mathbf{M}_p$ and the peer relationship state $\mathbf{G}$. 
Once written, the same memory can be read in different ways for different coordination decisions, such as steering response evaluation, routing tasks to suitable peers, or weighting peer outputs.
To make this memory adaptive over long horizons without allowing unstable feedback to distort the recorded reliability, we represent both $\{\mathbf{M}_p\}_{p=1}^{P}$ and $\mathbf{G}$ as real symmetric matrices and update them through a decayed historical state plus a bounded contribution from new responses~\citep{gu2021efficiently}. By Weyl's inequality $\big|\lambda(\mathbf{M}+\mathbf{E})-\lambda(\mathbf{M})\big|\leq\lVert\mathbf{E}\rVert_2$ ~\citep{weyl1912asymptotische, bhatia1997matrix}, the change of each eigenvalue after an update is bounded by the spectral norm of the new evidence, preventing abrupt distortions of the memory state.
This design allows repeated and consistent feedback to accumulate along stable spectral directions, while inconsistent or isolated signals are gradually attenuated.
Through this mechanism, \textbf{$\Sigma$-Mem} enables long-horizon MAS~\citep{zhang2026g,wu2026scaling,nayak2024llamar} to adapt their coordination behavior through online feedback without retraining the underlying models.
}

We evaluate \textbf{$\Sigma$-Mem} across five Qwen-family central models~\citep{yang2025qwen3, team2026qwen3}.
On the mixed counterfactual benchmark, \textbf{$\Sigma$-Mem} substantially improves peer selection when the reliable peer changes; especially at CF@90, it raises Qwen3-0.6B from 46.22\% to 71.10\%.
Although trained with only three peers, the system keeps improving when evaluated with four-peer and five-peer pools that contain unseen peer models.
Beyond the training domains of math, RAG, and code, it improves the base models in 27 out of 30 cases, with clear gains on BBH.
We then test whether the same memory supports decisions other than peer selection, as needed in agent swarms~\citep{zhuge2024language,zhang2025swarmagentic,team2026kimi} and mixture-of-agents architectures~\citep{wang2025mixture}.
Read as routing scores, the states pick a peer before any response is generated. This uses no central model at decision time, yet it outperforms both majority voting and the best fixed peer over the entire dataset. More importantly, We further conduct feedback ratio ablations on Qwen3.5-4B and Qwen3.5-9B. Accuracy on the OOD datasets consistently improves as the feedback ratio increases. This trend indicates that the \textbf{\emph{historical competence evidence}} maintained by $\Sigma$-Mem accumulates useful reliability information from observed events. These results further show that $\Sigma$-Mem is not tied to a specific aggregation procedure. Rather, it provides a reusable reliability memory that can support different selection mechanisms.

To sum up, our main contributions are summarized as follows:
\begin{itemize}

\item We propose \textbf{$\Sigma$-Mem}, an online reliability memory for MAS. It records \textbf{\emph{historical competence evidence}} for each peer and \textbf{\emph{peer relationship evidence}} across peers, both updated online from external correctness feedback.

\item We make the memory stable by design. Both states are real symmetric matrices with decayed, bounded updates. By Weyl's inequality, no single event can dominate the recorded reliability, while consistent evidence accumulates in the spectrum.

\item Across five central models, \textbf{$\Sigma$-Mem} adapts to counterfactual reliability shifts, generalizes to unseen peers and unseen domains. More importantly, we show that this memory serves many decision mechanisms. The same frozen memory state supports steered peer selection, response-free peer routing, and reliability-weighted voting well, without extra training.
\end{itemize}
\vspace{-0.5em}
\section{Background and Preliminaries}
\subsection{Weyl's Inequality}
A real symmetric matrix is a square matrix $\mathbf{M}\in\mathbb{R}^{r\times r}$ satisfying $\mathbf{M}=\mathbf{M}^\top$. Weyl's inequality characterizes the stability of eigenvalues under symmetric perturbations. Specifically, if a real symmetric matrix $\mathbf{M}$ is perturbed by another real symmetric matrix $\mathbf{E}$, then each eigenvalue satisfies
\begin{equation}
\label{eq:weyl}
\big|\lambda_i(\mathbf{M}+\mathbf{E})-\lambda_i(\mathbf{M})\big|
\le
\lVert \mathbf{E}\rVert_2 , \forall i=1,2,...,r
\end{equation}
where $\lVert \mathbf{E}\rVert_2$ is the spectral norm of the perturbation. The perturbation norm of a single event is usually bounded. This means that a small symmetric perturbation cannot cause an arbitrarily large change in any eigenvalue of the spectrum. The proof of Weyl's inequality is given in the appendix~\ref{appendix:weyl}.
In our \textbf{$\mathbf{\Sigma}$-Mem}, this property provides the basic spectral motivation for using the real symmetric memory matrix: each event-level update acts as a controlled perturbation, so the memory spectrum evolves stably while persistent aligned evidence can accumulate over time.
\subsection{Transformer Attention and Residual Steering}

In a Transformer~\citep{vaswani2017attention} for sequence modeling, let
$\mathbf{X}$ denote the hidden sequence. A self-attention block computes the query, key, and value matrices from the residual stream:
\begin{equation}
\label{eq:qkv}
\mathbf{Q}
=
\mathbf{W}_Q\operatorname{LN}(\mathbf{X}),\quad
\mathbf{K}
=
\mathbf{W}_K\operatorname{LN}(\mathbf{X}),\quad
\mathbf{V}
=
\mathbf{W}_V\operatorname{LN}(\mathbf{X}) .
\end{equation}
The attention output is then given by
\begin{equation}
\label{eq:attention}
\operatorname{Attn}(\mathbf{X})
=
\operatorname{softmax}
\left(
\frac{\mathbf{Q}\mathbf{K}^\top}{\sqrt{d_k}}
\right)
\mathbf{V}.
\end{equation}

Residual steering does not directly modify the attention parameters. Instead, it adds a peer-specific steer $\boldsymbol{\delta}_p(\mathbf{x})$ using the task input sequence $\mathbf{x}$ to the residual stream, thereby influencing the attention computation. Formally, we write
\begin{equation}
\label{eq:residual_perturb}
\widetilde{\mathbf{X}}_p
=
\mathbf{X}
+
\boldsymbol{\delta}_p(\mathbf{x}) .
\end{equation}
The attention block then computes its query, key, and value matrices from the new residual stream:
\begin{equation}
\label{eq:steered_qkv}
{\mathbf{Q}}_p
=
\mathbf{W}_Q
\operatorname{LN}\!\left(\widetilde{\mathbf{X}}_p\right),
\quad
{\mathbf{K}}_p
=
\mathbf{W}_K
\operatorname{LN}\!\left(\widetilde{\mathbf{X}}_p\right),
\quad
{\mathbf{V}}_p
=
\mathbf{W}_V
\operatorname{LN}\!\left(\widetilde{\mathbf{X}}_p\right).
\end{equation}
Thus, residual steering leaves the frozen attention weights unchanged, but changes the hidden states from which attention is computed. 
% In Section~\ref{sec:method}, we describe how $\Sigma$-Mem is designed to produce an effective peer-specific residual steering vector $\boldsymbol{\delta}_p(\mathbf{X})$.

\section{$\Sigma$-Mem Mechanism}
\label{sec:method}

\subsection{Event-Level $\Sigma$-Mem Record}
\label{sec:method:sigma_state}

\textbf{$\Sigma$-Mem} records two complementary forms of online evidence. The first is
\textbf{\emph{historical competence evidence}}, which tracks how each individual peer
performed on previous tasks close to the competence direction. The second is \textbf{\emph{peer relationship evidence}}, which tracks the relationships between peers across correctness. The former is recorded in peer-specific
symmetric matrices $\{\mathbf{M}_p\}_{p=1}^{P}$, while the latter is recorded in a symmetric peer relationship graph matrix $\mathbf{G}\in\mathbb{R}^{P\times P}$.

\begin{figure}[]
\vspace{-0.5em}
\centering
\includegraphics[width=1.0\linewidth]{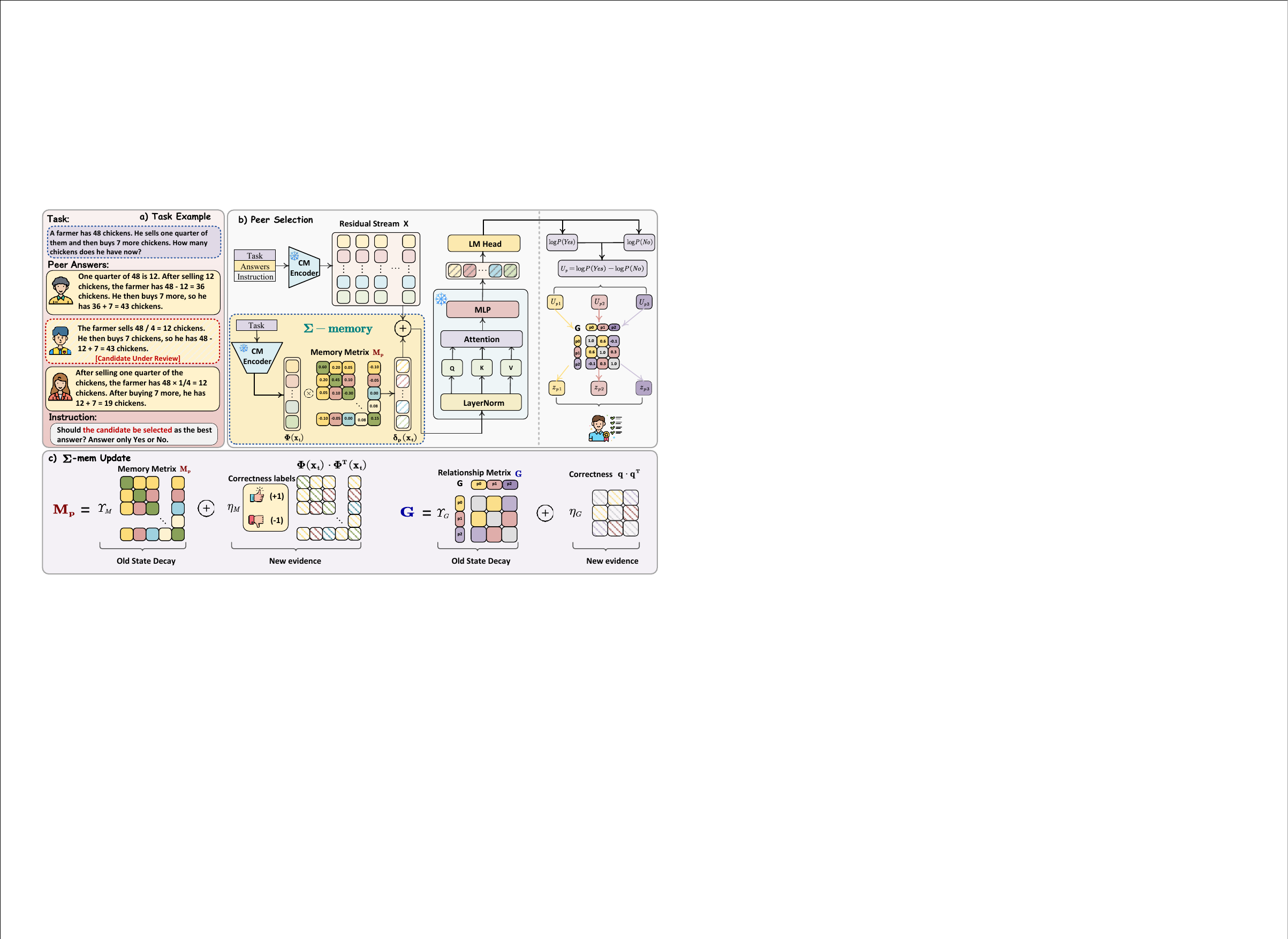}
\caption{Overview of \textbf{$\Sigma$-Mem}. Given a task and multiple peer answers: For each peer $p$, \textbf{$\Sigma$-Mem} reads the task-conditioned historical competence state $\mathbf{M}_p$ and converts it into a peer-specific residual steering signal, which influences the center model's attention and produces a utility score. The utilities are then combined with the peer relationship $\mathbf{G}$ for final aggregation. After external correctness labels become available, \textbf{$\Sigma$-Mem} updates both the competence memories and peer relationship matrix.}
\label{fig:method}
\end{figure}

As shown in Figure~\ref{fig:method}. For each peer $p$, \textbf{$\Sigma$-Mem} maintains a symmetric memory matrix
$\mathbf{M}_p\in\mathbb{R}^{r\times r}$, initialized as
$\mathbf{M}_p^{(0)}=\mathbf{0}$. \textbf{$\Sigma$-Mem} updates the state once per evaluated
event. Given an input $\mathbf{x}_t$, we compute a competence direction
$\boldsymbol{\phi}(\mathbf{x}_t)\in\mathbb{R}^r$ with
$\|\boldsymbol{\phi}(\mathbf{x}_t)\|_2=1$ and obtain a ground truth correctness label
$c_{p,t}\in\{+1,-1\}$ for peer $p$. The competence memory is updated as
\begin{equation}
\label{eq:sigma_mem_update}
\mathbf{M}_p^{(t+1)}
=
\gamma \mathbf{M}_p^{(t)}
+
\eta\, c_{p,t}\,
\boldsymbol{\phi}(\mathbf{x}_t)
\boldsymbol{\phi}(\mathbf{x}_t)^\top ,
\end{equation}
where $\gamma\in(0,1)$ is the decay factor and $\eta>0$ is the update factor. The update term
$\boldsymbol{\phi}(\mathbf{x}_t)\boldsymbol{\phi}(\mathbf{x}_t)^\top$
is a rank-one symmetric matrix. Therefore, starting from
$\mathbf{M}_p^{(0)}=\mathbf{0}$, the update preserves
$\mathbf{M}_p=\mathbf{M}_p^\top$ at every step. 
Intuitively, persistent aligned evidence accumulates in the spectrum of $\mathbf{M}_p$, while isolated or noisy evidence is bounded by Weyl's inequality (Eq.~\eqref{eq:weyl}) and gradually damped by the decay factor.
This bounded single step perturbation translates into a favorable long-horizon trade-off: task-aligned, persistent competence evidence accumulates in the spectrum  as $T$ grows.
\begin{theorem}[Persistent evidence dominates noise in the spectrum]
\label{thm:sigma-mem-snr-main}
Fix a competence direction $\boldsymbol{\phi}^\star$ along which peer $p$
has a persistent expected competence $\mu_p\in[-1,1]$, that is, $c_{p,t}=\mu_p+\xi_t$ for independent, zero-mean noise
$\xi_t\in[-2,2]$, whenever $\boldsymbol{\phi}(\mathbf{x}_t)=\boldsymbol{\phi}^\star$. Then for every
horizon $T$ and every $\delta\in(0,1)$, with probability at least $1-\delta$,
\begin{equation}
    \lambda_{\max}\bigl(\mathbf{M}_p^{(T)}\bigr)
    \;\geq\;
    \underbrace{\eta\mu_p\,\frac{1-\gamma^T}{1-\gamma}}_{\text{signal}}
    \;-\;
    \underbrace{\eta\sqrt{\dfrac{8\log(2/\delta)}{1-\gamma^2}}}
      _{\text{noise, independent of } T}.
\end{equation}
The noise term never grows with $T$, while the signal term saturates at
$\eta\mu_p/(1-\gamma)$; the resulting stationary signal-to-noise ratio
$\mu_p\sqrt{(1+\gamma)/(1-\gamma)}$ increases monotonically with the
decay factor $\gamma$.
\end{theorem}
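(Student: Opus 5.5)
The plan is to unroll the linear recursion in Eq.~\eqref{eq:sigma_mem_update}, reduce $\lambda_{\max}$ to a scalar weighted sum through a Rayleigh quotient, and control that sum with Hoeffding's inequality.

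\textbf{Reading of the hypothesis.} I read the statement as concerning the events that hit the fixed direction $\boldsymbol{\phi}^\star$. I index these events $t=0,\dots,T-1$, so that $\boldsymbol{\phi}(\mathbf{x}_t)=\boldsymbol{\phi}^\star$ throughout the horizon. This is the setting in which the bound holds as written. If other events with arbitrary directions and signs were interleaved, their negative rank-one contributions could lower the Rayleigh quotient below. The bound would then need either an extra assumption (for instance, $\boldsymbol{\phi}(\mathbf{x}_t)\perp\boldsymbol{\phi}^\star$ off-direction, or nonnegative off-direction contributions) or an extra Weyl-type error term from Eq.~\eqref{eq:weyl}. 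Pinning down this interpretation is the main obstacle; the rest is routine. I will state the single-direction reading explicitly and remark that orthogonal off-direction updates leave $\boldsymbol{\phi}^{\star\top}\mathbf{M}_p\boldsymbol{\phi}^\star$ unchanged, so the argument extends to that case verbatim.

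\textbf{Step 1 (unroll and reduce to a scalar).} Starting from $\mathbf{M}_p^{(0)}=\mathbf{0}$, induction on Eq.~\eqref{eq:sigma_mem_update} gives
$\mathbf{M}_p^{(T)}=\eta\sum_{t=0}^{T-1}\gamma^{T-1-t}c_{p,t}\,\boldsymbol{\phi}(\mathbf{x}_t)\boldsymbol{\phi}(\mathbf{x}_t)^\top$.
By the Rayleigh characterization and $\|\boldsymbol{\phi}^\star\|_2=1$, $\lambda_{\max}(\mathbf{M}_p^{(T)})\ge \boldsymbol{\phi}^{\star\top}\mathbf{M}_p^{(T)}\boldsymbol{\phi}^\star$. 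Under the reading above, this equals $\eta S_T$ with $S_T=\sum_{k=0}^{T-1}\gamma^{k}c_{p,T-1-k}$. Substituting $c_{p,t}=\mu_p+\xi_t$ splits it as $S_T=\mu_p\frac{1-\gamma^T}{1-\gamma}+N_T$, where $N_T=\sum_{k=0}^{T-1}\gamma^k\xi_{T-1-k}$. The first term is exactly the signal in the theorem.

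\textbf{Step 2 (concentrate the noise).} The variables $\gamma^k\xi_{T-1-k}$ are independent and zero-mean. Each lies in an interval of length $4\gamma^k$, since $\xi_t\in[-2,2]$. Hoeffding's inequality gives
$\Prob(|N_T|\ge s)\le 2\exp\!\bigl(-2s^2/\sum_k 16\gamma^{2k}\bigr)=2\exp\!\bigl(-s^2/(8\sum_k\gamma^{2k})\bigr)$.
Setting the right-hand side to $\delta$ and using $\sum_{k<T}\gamma^{2k}\le 1/(1-\gamma^2)$ shows that, with probability at least $1-\delta$,
$N_T\ge -\sqrt{8\log(2/\delta)/(1-\gamma^2)}$.
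This noise bound is independent of $T$. Multiplying by $\eta$ and combining with Step 1 yields the displayed inequality.

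\textbf{Step 3 (stationary ratio).} As $T\to\infty$ the signal tends to $\eta\mu_p/(1-\gamma)$. Dividing by the noise scale $\eta/\sqrt{1-\gamma^2}$ (dropping the $\sqrt{8\log(2/\delta)}$ constant) gives
$\mu_p\sqrt{1-\gamma^2}/(1-\gamma)=\mu_p\sqrt{(1+\gamma)/(1-\gamma)}$.
This is increasing in $\gamma\in(0,1)$ because $(1+\gamma)/(1-\gamma)$ is, which gives the monotonicity claim for $\mu_p>0$.
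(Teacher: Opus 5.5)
Your proof is correct and is essentially the paper's argument. The paper also unrolls the recursion, lower-bounds $\lambda_{\max}$ by the Rayleigh quotient at $\boldsymbol{\phi}^\star$, and controls the noise by Hoeffding with $\sum_k\gamma^{2k}\le 1/(1-\gamma^2)$. It obtains the theorem by specializing its general decomposition (weighted by $\psi_t=(\boldsymbol{\phi}(\mathbf{x}_t)^\top\boldsymbol{\phi}^\star)^2$) to $\psi_t\equiv 1$, which is exactly your single-direction reading. Your restriction of the monotonicity claim to $\mu_p>0$ is a correct refinement of the paper's unqualified statement.

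One correction to your planned aside: an orthogonal update does not leave $\boldsymbol{\phi}^{\star\top}\mathbf{M}_p\boldsymbol{\phi}^\star$ unchanged, because the decay still multiplies it by $\gamma$. With $m$ aligned events among $T$ total, the signal becomes $\eta\mu_p\sum_{t:\,\boldsymbol{\phi}(\mathbf{x}_t)=\boldsymbol{\phi}^\star}\gamma^{T-1-t}$. For $\mu_p>0$ this is strictly below $\eta\mu_p(1-\gamma^m)/(1-\gamma)$ as soon as an orthogonal event follows an aligned one. The extension is therefore not verbatim, so either drop that remark or state the weighted signal as the paper's lemma does.
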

The proof of Theorem~\ref{thm:sigma-mem-snr-main} is given in Appendix~\ref{appendix:proof_of_theorem1}, which shows that once $T$ is large enough for the signal term to clear the noise floor, the leading eigenvalue of $\mathbf{M}_p$ reflects peer $p$'s true competence rather than event-level noise, with high probability. 
This ensures that judgment is grounded in a genuine competence signal rather than historical noise.

In addition to the per-peer memory update, \textbf{$\Sigma$-Mem} records the relational structure among peers. Let
$\mathbf{c}_t=[c_{1,t},\ldots,c_{P,t}]^\top$ denote the correctness vector for event
$t$. We center this vector within the peer set:
\begin{equation}
\label{eq:sigma_centered_correctness}
q_{p,t}
=
c_{p,t}
-
\frac{1}{P}\sum_{j=1}^{P} c_{j,t}.
\end{equation}
The peer relationship matrix is then updated by
\begin{equation}
\label{eq:sigma_topology_update}
\mathbf{G}^{(t+1)}
=
\gamma_G \mathbf{G}^{(t)}
+
\eta_G \mathbf{q}_t\mathbf{q}_t^\top,
\qquad
\operatorname{diag}(\mathbf{G}^{(t+1)})=\mathbf{1},
\end{equation}
where $\gamma_G\in(0,1)$ and $\eta_G>0$ are the topology decay and update factors. Intuitively, $G_{p,q}>0$ indicates that the correctness signals of peers $p$ and $q$ tend to support each other within the peer set, whereas $G_{p,q}<0$ indicates that their correctness signals tend to oppose each other.

\subsection{Memory Readout and Residual Steering}
\label{sec:method:sigma_read}

At decision time for event $t$, \textbf{$\Sigma$-Mem} reads from the \textbf{\textit{historical competence evidence}}
$\mathbf{M}_p^{(t)}$. For each peer $p$, we compute a peer-specific memory readout
\begin{equation}
\label{eq:sigma_readout}
\mathbf{r}_{p,t}
=
\mathbf{M}_p^{(t)} \boldsymbol{\phi}(\mathbf{x}_t),
\end{equation}
where $\mathbf{r}_{p,t}\in\mathbb{R}^r$ is a competence reliability direction.
This readout is then projected to the hidden dimension of the center model:
\begin{equation}
\label{eq:sigma_delta}
\boldsymbol{\delta}_{p,t}(\mathbf{x}_t)
=
g\,\mathbf{P}\mathbf{r}_{p,t}
=
g\,\mathbf{P}\mathbf{M}_p^{(t)}\boldsymbol{\phi}(\mathbf{x}_t),
\end{equation}
where $\mathbf{P}\in\mathbb{R}^{d\times r}$ is a learned projection and $g$ is a
learned scalar gain. In order to preserve the center model's ability to understand the input while allowing
the memory to influence candidate evaluation, we inject the steering vector only
into the upper decoder blocks. Specifically, for
$\ell \in \{\lfloor L/2 \rfloor,\ldots,L-1\}$, the residual stream is shifted as
\begin{equation}
\label{eq:sigma_residual_shift}
\widetilde{\mathbf{X}}_{p}^{(\ell)}
=
\mathbf{X}^{(\ell)}
+
\boldsymbol{\delta}_{p,t}(\mathbf{x}_t).
\end{equation}

Following Eq.~\eqref{eq:steered_qkv}, each peer $p$ induces its own shifted residual
stream $\widetilde{\mathbf{X}}_{p}^{(\ell)}$. Because $\boldsymbol{\delta}_{p,t}(\mathbf{x}_t)$ depends on the peer-specific memory state $\mathbf{M}_p^{(t)}$, the same input
can lead to different residual shifts for different peers. The subsequent
attention computations are therefore peer-conditioned, enabling the center model
to distinguish candidates not only by their current answers, but also by their
historical reliability records.

\subsection{Utility Eval and Peer Selection}
\label{sec:method:sigma_scoring}

Given a set of peer answers $\{a_{p,t}\}_{p=1}^{P}$ for event $t$, the center
model evaluates each answer under its corresponding peer-specific steering vector
$\boldsymbol{\delta}_{p,t}(\mathbf{x}_t)$. The center model serves as a utility
evaluator: it is asked whether the answer from peer $p$ should be trusted for the
current task. We implement this utility as a Yes-versus-No likelihood contrast,
\begin{equation}
\label{eq:sigma_peer_utility}
U_{p,t}
=
\log P_{\mathrm{CM}}
\left(
\text{"Yes"}
\mid
\mathbf{x}_t, a_{p,t};
\boldsymbol{\delta}_{p,t}(\mathbf{x}_t)
\right)
-
\log P_{\mathrm{CM}}
\left(
\text{"No"}
\mid
\mathbf{x}_t, a_{p,t};
\boldsymbol{\delta}_{p,t}(\mathbf{x}_t)
\right).
\end{equation}
Here
$P_{\mathrm{CM}}(\cdot \mid \mathbf{x}_t,a_{p,t};
\boldsymbol{\delta}_{p,t})$
denotes the next-token distribution obtained by running the center model on the
judging prompt formed from $(\mathbf{x}_t,a_{p,t})$, while applying the
peer-specific residual shift $\boldsymbol{\delta}_{p,t}(\mathbf{x}_t)$ defined in
Sec.~\ref{sec:method:sigma_read}.

Since $U_{p,t}$ is computed under the residual steering induced by
$\mathbf{M}_p^{(t)}$, it already incorporates historical competence
evidence. To further incorporate the peer relationship, we normalize the
utilities within the peer set:
\begin{equation}
\label{eq:sigma_normalized_utility}
z_{p,t}
=
\frac{
U_{p,t}
-
\frac{1}{P}\sum_{q=1}^{P}U_{q,t}
}{
\sqrt{
\frac{1}{P}\sum_{q=1}^{P}
\left(
U_{q,t}
-
\frac{1}{P}\sum_{j=1}^{P}U_{j,t}
\right)^2
}
+\epsilon
}.
\end{equation}
We then define a posterior over latent peer correctness states
$\mathbf{y}_t\in\{-1,+1\}^{P}$:
\begin{equation}
\label{eq:sigma_topology_posterior}
P(\mathbf{y}_t \mid \mathbf{x}_t,\mathbf{M}^{(t)},\mathbf{G}^{(t)})
\propto
\exp\left(
w_u \sum_{p=1}^{P} z_{p,t} y_{p,t}
+
\frac{w_g}{2}
\sum_{p\neq q}
G_{p,q}^{(t)} y_{p,t}y_{q,t}
\right),
\end{equation}
where $w_u,w_g$ control
the strengths of the competence utility evidence and the peer relationship evidence,
respectively. The selected peer is the one with the largest posterior correctness expectation:
\begin{equation}
\label{eq:sigma_peer_select}
\hat{p}_t
=
\arg\max_p
\mathbb{E}
\left[
y_{p,t}
\mid
\mathbf{x}_t,\mathbf{M}^{(t)},\mathbf{G}^{(t)}
\right].
\end{equation}
These expectations can be
computed by exact enumeration over $\{-1,+1\}^{P}$. Importantly, $U_{p,t}$ is only a decision-time utility score and is not used as
the memory update label. After the event is evaluated, \textbf{$\Sigma$-Mem} receives the
external correctness signals $\{c_{p,t}\}_{p=1}^{P}$ from the benchmark or
environment and updates both $\mathbf{M}_p$ and $\mathbf{G}$ according to
Eq.~\eqref{eq:sigma_mem_update} and Eq.~\eqref{eq:sigma_topology_update}. This
separates selection from memory writing: peer selection is based on the center
model's current utility judgment, historical steering, and
peer relationship, while memory updates are grounded in external feedback. 

\subsection{Direct Readouts for Other Selection Mechanisms}
\label{sec:method:direct_readouts}

Broadly speaking, steered peer selection (Sec.~\ref{sec:method:sigma_scoring}) is one way to read the memory. The same states also support decisions that do not use the central model's judgment. At event $t$, the frozen central model encodes only the current question $\mathbf{x}_t$ to obtain the competence direction $\boldsymbol{\phi}(\mathbf{x}_t)$. We then compute the historical reliability score of peer $p$ as
\begin{equation}
\label{eq:direct_m_score}
s_{p,t}
=
\boldsymbol{\phi}(\mathbf{x}_t)^\top
\mathbf{M}_p^{(t)}
\boldsymbol{\phi}(\mathbf{x}_t).
\end{equation}
This score depends only on the current question and the task-conditioned competence evidence accumulated from previous events. And the peer's response is not used to compute $s_{p,t}$. This direct readout supports two decision rules.

\paragraph{$\mathbf{M}$ routing.}
The first rule selects the peer with the largest memory readout:
\begin{equation}
\label{eq:direct_m_route}
\widehat{p}_t^{\,\mathrm{M\text{-}route}}
=
\arg\max_p s_{p,t}.
\end{equation}
The final prediction is the response produced by the selected peer. The routing function itself does not receive the peer responses as input. Its decision is determined entirely by the historical competence evidence stored in $\{\mathbf{M}_p^{(t)}\}_{p=1}^{P}$.

\paragraph{$\mathbf{M}$-weighted voting.}
The second rule uses the direct memory readout as a peer-specific voting weight. For every canonical answer $a$, we aggregate the reliability scores of the peers that produce that answer:
\begin{equation}
\label{eq:direct_m_vote}
V_t^{\mathrm{M}}(a)
=
\sum_{p:\operatorname{canon}(a_{p,t})=a}
s_{p,t},
\qquad
\widehat{a}_t^{\,\mathrm{M\text{-}vote}}
=
\arg\max_a V_t^{\mathrm{M}}(a),
\end{equation}
where $\operatorname{canon}(\cdot)$ maps each response to its task-specific canonical answer. The reliability weights $\{s_{p,t}\}_{p=1}^{P}$ are computed before answer aggregation and do not depend on the content of the current responses. The response strings are used only to identify peers that provide the same canonical answer. They are not semantically encoded, verified, or judged by the central model.

\section{Experiments}
\subsection{Experimental Setup}
We train \textbf{\textbf{$\Sigma$-Mem}} on a mixed supervised dataset of 2,963 task events
constructed from three competence domains: \textbf{mathematical reasoning}, \textbf{retrieval-augmented question answering}, and \textbf{code generation}. Each event consists of a task input, a set of peer-generated answers, and external per-peer correctness labels. We evaluate \textbf{$\Sigma$-Mem} along four axes. First, a mixed counterfactual benchmark with four reliability splits (CF@0, CF@50, CF@70, and CF@90) tests whether the memory helps the center model adapt when the prior reliable peer changes; a higher CF level indicates a stronger counterfactual shift in peer reliability (Sec.~\ref{sec:CF_experiment}). Second, we enlarge the peer pool at test time with unseen peer models (Sec.~\ref{sec:Multi_Peers}). Third, we test generalization beyond the training domains on six OOD benchmarks covering commonsense reasoning, knowledge understanding, science question answering, complex reasoning, and language understanding (Sec.~\ref{sec:OOD_Task}). Finally, we test whether the same memory supports selection mechanisms beyond steered peer selection, including response-free M-route and reliability-weighted M-vote, together with a feedback-availability ablation (Sec.~\ref{sec:coordination-role}). The training and test sets are drawn from different data sources; their composition and construction are detailed in Appendix~\ref{appendix:dataset}.

For the center model, we use open-source models from the widely studied Qwen family, covering multiple model sizes: Qwen3-0.6B, Qwen3-4B, Qwen3-8B, Qwen3.5-4B, and Qwen3.5-9B~\citep{yang2025qwen3,team2026qwen3}. For peer answer generators, we use a heterogeneous set of models with complementary strengths: Gemma-3-4B-it~\citep{gemmateam2025gemma3technicalreport} for mathematical reasoning, Phi-4-mini-instruct~\citep{abouelenin2025phi} for retrieval-augmented question answering, and Qwen2.5-Coder-7B-Instruct~\citep{yang2025qwen3} for code generation. This peer pool creates clear competence differences across domains, making it suitable for evaluating whether \textbf{$\Sigma$-Mem} can help center model select and adapt among peers.

\subsection{Counter Factual Attack Experiment}
\label{sec:CF_experiment}
As shown in Table~\ref{tab:CF_results}, we compare Qwen3 and Qwen3.5 models with the same base models equipped with our proposed \textbf{$\Sigma$-Mem} on a mixed counterfactual benchmark with four reliability splits: CF@0, CF@50, CF@70, and CF@90. $\Sigma$ w/o G denotes \textbf{$\Sigma$-Mem} recording only \textbf{\emph{historical competence evidence}}, while $\Sigma$ w/ G denotes \textbf{$\Sigma$-Mem} incorporating \textbf{\emph{peer relationship evidence}}.

\begin{table*}[!h]
\centering
\caption{Counter Factual (CF) Attack results under different CF@ratios. \texttt{Acc} denotes the task accuracy obtained by using the selected peer response, and \texttt{p1(Gemma)/p2(Phi)/p3(Qwen-Coder)} reports the center model's selection counts for the three peer agents. We \textcolor{darkred}{\textbf{highlight}} the best performing result for each model.}
\label{tab:CF_results}
\setlength{\tabcolsep}{3pt}
\renewcommand{\arraystretch}{1.15}
\resizebox{\textwidth}{!}{
\begin{tabular}{lcccccccc}
\toprule
\multirow{2}{*}{Model}
& \multicolumn{2}{c}{CF@0}
& \multicolumn{2}{c}{CF@50}
& \multicolumn{2}{c}{CF@70}
& \multicolumn{2}{c}{CF@90} \\
\cmidrule(lr){2-3}
\cmidrule(lr){4-5}
\cmidrule(lr){6-7}
\cmidrule(lr){8-9}
& Acc & p1/p2/p3
& Acc & p1/p2/p3
& Acc & p1/p2/p3
& Acc & p1/p2/p3 \\
\midrule

% \rowcolor{gray!10}
% Oracle
% & 55.7\% & 2685/0/0
% & 62.9\% & 0/0/2685
% & 69.6\% & 0/0/2685
% & 75.8\% & 0/0/2685 \\

% \rowcolor{gray!10}
% Ceiling
% & 93.6\% & 1495/759/260
% & 93.6\% & 1200/859/455
% & 93.6\% & 1082/891/541
% & 93.6\% & 963/949/602 \\

% \midrule

Qwen3-0.6B
& 61.01\% & 1837/192/656
& 52.89\% & 1828/222/635
& 50.50\% & 1838/228/619
& 46.22\% & 1830/229/626 \\

\enspace + $\Sigma$ w/o G
& \textcolor{darkred}{\textbf{64.17\%}} & 1801/355/529
& 55.38\% & 617/511/1557
& 61.19\% & 261/714/1710
& 68.64\% & 82/968/1635 \\

\enspace + $\Sigma$ w/ G
& 63.87\% & 1817/347/521
& \textcolor{darkred}{\textbf{56.42}}\% & 614/396/1675
& \textcolor{darkred}{\textbf{62.72}}\% & 286/518/1881
& \textcolor{darkred}{\textbf{71.10}}\% & 81/611/1993 \\

\midrule

Qwen3-4B
& 61.71\% & 930/339/1416
& \textcolor{darkred}{\textbf{65.55}}\% & 930/429/1326
& 65.77\% & 944/460/1281
& 67.86\% & 955/482/1248 \\

\enspace + $\Sigma$ w/o G
& 63.35\% & 1434/589/662
& 61.15\% & 451/390/1844
& 66.41\% & 238/363/2084
& 72.74\% & 129/351/2205 \\

\enspace + $\Sigma$ w/ G
& \textcolor{darkred}{\textbf{64.51}}\% & 1290/542/853
& 63.61\% & 447/344/1894
& \textcolor{darkred}{\textbf{67.64}}\% & 265/300/2120
& \textcolor{darkred}{\textbf{73.30}}\% & 143/287/2255 \\

\midrule

Qwen3-8B
& 66.89\% & 1068/572/1045
& \textcolor{darkred}{\textbf{67.67}}\% & 1006/573/1106
& 66.26\% & 1008/566/1111
& 66.07\% & 985/545/1155 \\

\enspace + $\Sigma$ w/o G
& \textcolor{darkred}{\textbf{67.75}}\% & 1392/624/669
& 62.53\% & 535/547/1603
& 64.80\% & 307/552/1826
& 68.34\% & 227/571/1887 \\

\enspace + $\Sigma$ w/ G
& 67.67\% & 1384/558/743
& 63.76\% & 500/462/1723
& \textcolor{darkred}{\textbf{66.33}}\% & 230/493/1962
& \textcolor{darkred}{\textbf{69.16}}\% & 129/494/2062 \\

\midrule

Qwen3.5-4B
& 67.15\% & 1360/438/887
& \textcolor{darkred}{\textbf{66.15}}\% & 1271/526/888
& 64.69\% & 1258/563/864
& 63.87\% & 1211/565/909 \\

\enspace + $\Sigma$ w/o G
& 70.99\% & 1745/583/357
& 60.89\% & 463/385/1837
& 66.89\% & 179/457/2049
& 73.15\% & 70/696/1919 \\

\enspace + $\Sigma$ w/ G
& \textcolor{darkred}{\textbf{72.77}}\% & 1619/596/470
& 63.17\% & 451/331/1903
& \textcolor{darkred}{\textbf{68.60}}\% & 182/314/2189
& \textcolor{darkred}{\textbf{75.46}}\% & 71/288/2326 \\

\midrule

Qwen3.5-9B
& 66.48\% & 1041/582/1062
& \textcolor{darkred}{\textbf{68.94}}\% & 1144/528/1013
& 69.65\% & 1196/495/994
& 70.99\% & 1193/481/1011 \\

\enspace + $\Sigma$ w/o G
& 71.81\% & 1771/514/400
& 62.83\% & 469/359/1857
& 69.24\% & 210/343/2132
& 75.46\% & 137/329/2219 \\

\enspace + $\Sigma$ w/ G
& \textcolor{darkred}{\textbf{71.84}}\% & 1629/498/558
& 65.96\% & 459/316/1910
& \textcolor{darkred}{\textbf{71.40}}\% & 194/284/2207
& \textcolor{darkred}{\textbf{76.95}}\% & 95/214/2376 \\

\bottomrule
\end{tabular}
}
\end{table*}
\textbf{Obs.1.}
For Qwen3-0.6B, the base model shows a strong bias toward peer 1 across all CF ratios, selecting peer 1 for more than 1,800 examples in each CF setting. As the counterfactual attack becomes more severe, this fixed peer preference becomes increasingly harmful, causing the accuracy to drop from 61.01\% at CF@0 to 46.22\% at CF@90. After applying \textbf{$\Sigma$-Mem}, the model consistently improves performance under all CF ratios. In particular, $\Sigma$ w/ G increases accuracy from 46.22\% to 71.10\% at CF@90, showing that \textbf{$\Sigma$-Mem} \textbf{enables the weak center model to adjust which peer it should actually trust, thereby improving selection accuracy}.

\textbf{Obs.2.}
For Qwen3-4B, Qwen3-8B, Qwen3.5-4B, and Qwen3.5-9B, the base accuracy remains relatively stable as the counterfactual attack becomes more severe. After applying \textbf{$\Sigma$-Mem}, the models further improve at CF@0, CF@70, and CF@90 settings, showing that \textbf{$\Sigma$-Mem records peers' historical competence and helps the center model adjust which peer it should actually trust under clear reliability patterns}. The main exception is CF@50, where \textbf{$\Sigma$-Mem} decreases accuracy. This behavior is not a failure of the mechanism, but reflects the history-sensitive nature of \textbf{$\Sigma$-Mem} under an intentionally ambiguous data stream. In CF@50, half of the historical evidence supports the original peer preference, while the other half supports the counterfactual shift. As a result, the memory state accumulates misleading historical evidence from the CF@50 stream, which can steer the center model toward the wrong peer. We further verified this in Appendix~\ref{appendix:role}. We stress that this is a \emph{faithfulness} property of the memory rather than an aggregation-specific defect: \textbf{$\Sigma$-Mem} accurately integrates whatever reliability evidence the stream provides, so a stream that is ambiguous by construction (\textit{CF@50, where half the history supports each preference}) necessarily yields ambiguous memory. This motivates readouts that adaptively balance historical guidance against current-response evidence in future work.

\textbf{Obs.3.}
$\Sigma$ w/ G generally outperforms $\Sigma$ w/o G, indicating that the \textbf{\emph{peer relationship evidence}} provides useful complementary information beyond individual peer histories. This suggests that peer selection for the central model benefits not only from knowing whether each peer has been reliable on similar tasks, but also from modeling how peers are related to each other through their correctness patterns. Therefore, in the following experiments, including Experiment~2 in Sec.~\ref{sec:Multi_Peers} and Experiment~3 in
Sec.~\ref{sec:OOD_Task}, we utilize $\Sigma$ w/ G as the default \textbf{$\Sigma$-Mem}
configuration.

\subsection{Generalization to More Peers}
\label{sec:Multi_Peers}
Since the trainable components in \textbf{$\Sigma$-Mem} are shared across peers, adding new peers only requires allocating new memory states, without modifying the learned parameters. To test whether the learned selection mechanism can generalize beyond the peers utilized during training. During testing, we expand the peer set by introducing additional unseen peers, including \texttt{Peer 4} (Llama-3.2-3B-Instruct~\citep{grattafiori2024llama}) and \texttt{Peer 5} (BitCPM-CANN-3B~\citep{team2025minicpm4}), and evaluate the system under larger peer pools.
\begin{table}[!h]
\centering
\caption{Peer-count generalization accuracy results on the counter facual benchmark. \textbf{$\Sigma$-Mem} is trained with three peers and evaluated with enlarged peer pools containing 4 or 5 peers by adding unseen peer models at test time. Arrows indicate whether \textbf{$\Sigma$-Mem} improves(\up) or degrades(\down) the corresponding base model.}
\label{tab:scaling_results}
\setlength{\tabcolsep}{4pt}
\renewcommand{\arraystretch}{1.1}
\begin{tabular}{lcccccccc}
\toprule
\multirow{2}{*}{Center Model}
& \multicolumn{2}{c}{CF@0}
& \multicolumn{2}{c}{CF@50}
& \multicolumn{2}{c}{CF@70}
& \multicolumn{2}{c}{CF@90} \\
\cmidrule(lr){2-3}
\cmidrule(lr){4-5}
\cmidrule(lr){6-7}
\cmidrule(lr){8-9}
& Base & \textbf{$\Sigma$-Mem}
& Base & \textbf{$\Sigma$-Mem}
& Base & \textbf{$\Sigma$-Mem}
& Base & \textbf{$\Sigma$-Mem} \\
\midrule

\rowcolor{gray!10}
\multicolumn{9}{c}{\textbf{4 peers}} \\

Qwen3-0.6B
& {58.85\%} & {62.05\%}\up
& 49.83\% & 51.43\%\up
& 46.59\% & 56.69\%\up
& 43.24\% & 63.09\%\up \\

Qwen3-4B
& {60.48\%} & {62.94\%}\up
& 60.74\% & 57.17\%\down
& 60.60\% & 61.38\%\up
& 61.86\% & 67.49\%\up \\

Qwen3-8B
& {64.28\%} & {66.29\%}\up
& 61.94\% & 57.09\%\down
& 60.56\% & 60.11\%\down
& 60.11\% & 62.98\%\up \\

Qwen3.5-4B
& {63.50\%} & {70.54\%}\up
& 61.94\% & 56.91\%\down
& 60.56\% & 62.98\%\up
& 61.04\% & 67.97\%\up \\

Qwen3.5-9B
& {66.07\%} & {71.66\%}\up
& {65.33\%} & 59.59\%\down
& 65.18\% & 65.10\%\down
& 64.32\% & 71.21\%\up \\

\midrule

\rowcolor{gray!10}
\multicolumn{9}{c}{\textbf{5 peers}} \\

Qwen3-0.6B
& {62.79\%} & {63.72\%}\up
& 47.56\% & 51.88\%\up
& 40.93\% & 56.50\%\up
& 34.71\% & 62.09\%\up \\

Qwen3-4B
& {61.01\%} & {62.98\%}\up
& 59.66\% & 58.47\%\down
& 59.18\% & 61.68\%\up
& 59.18\% & 64.54\%\up \\

Qwen3-8B
& {64.28\%} & {66.55\%}\up
& 60.63\% & 56.72\%\down
& 58.99\% & 60.30\%\up
& 58.29\% & 60.97\%\up \\

Qwen3.5-4B
& {66.07\%} & {70.61\%}\up
& {63.65\%} & 58.21\%\down
& 62.53\% & 63.17\%\up
& 61.86\% & 68.19\%\up \\

Qwen3.5-9B
& {67.15\%} & {71.14\%}\up
& {65.25\%} & 60.60\%\down
& 64.80\% & 66.03\%\up
& 64.32\% & 69.09\%\up \\

\bottomrule
\end{tabular}
\end{table}

Consistent with Experiment~1(Sec~\ref{sec:CF_experiment}), \textbf{$\Sigma$-Mem} shows the same overall behavior under expanded peer pools. At CF@0, CF@70, and CF@90, the recorded \textbf{\emph{historical competence evidence}} and \textbf{\emph{peer relationship evidence}} \textbf{help the center model identify the appropriate peer, leading to improved accuracy over the base model}. The main exception remains CF@50. In this split, the data stream provides misleading historical evidence, which \textbf{$\Sigma$-Mem} faithfully accumulates and steers the center model toward the wrong peer.

\subsection{Generalization Beyond Training Domains}
\label{sec:OOD_Task}
Although \textbf{$\Sigma$-Mem} is trained only on math, RAG, and code events, it can map
an unseen task to the closest learned competence directions through $\boldsymbol{\phi}(\mathbf{x})$. To evaluate whether this ability supports generalization beyond the training domains, we test \textbf{$\Sigma$-Mem} on out-of-distribution(OOD) benchmarks covering physical commonsense reasoning (PIQA~\citep{bisk2020piqa}), broad knowledge understanding (MMLU~\citep{hendrycks2020measuring}), science question answering (OpenBookQA~\citep{mihaylov2018can} and SciQ~\citep{welbl2017crowdsourcing}), complex reasoning (BBH~\citep{suzgun2023challenging}), and language understanding (SuperGLUE~\citep{wang2019superglue}). The results are shown in Table~\ref{tab:generalization_results}.

\begin{table*}[h!]
\centering
\caption{Generalization results on out-of-distribution (OOD) domains. \textbf{$\Sigma$-Mem} is evaluated on unseen benchmarks covering commonsense reasoning, broad knowledge understanding, science question answering, complex reasoning, and language understanding. Arrows indicate whether \textbf{$\Sigma$-Mem} improves (\up) or degrades (\down) the corresponding base model.}
\label{tab:generalization_results}
\setlength{\tabcolsep}{3pt}
\renewcommand{\arraystretch}{1.1}
\resizebox{\textwidth}{!}{
\begin{tabular}{lcccccccccccc}
\toprule
\multirow{2}{*}{Center Model}
& \multicolumn{2}{c}{PIQA}
& \multicolumn{2}{c}{MMLU}
& \multicolumn{2}{c}{OpenBookQA}
& \multicolumn{2}{c}{SciQ}
& \multicolumn{2}{c}{BBH}
& \multicolumn{2}{c}{SuperGLUE} \\
\cmidrule(lr){2-3}
\cmidrule(lr){4-5}
\cmidrule(lr){6-7}
\cmidrule(lr){8-9}
\cmidrule(lr){10-11}
\cmidrule(lr){12-13}
& Base & \textbf{$\Sigma$-Mem}
& Base & \textbf{$\Sigma$-Mem}
& Base & \textbf{$\Sigma$-Mem}
& Base & \textbf{$\Sigma$-Mem}
& Base & \textbf{$\Sigma$-Mem}
& Base & \textbf{$\Sigma$-Mem} \\
\midrule

Qwen3-0.6B
& 80.69\% & 81.12\%\up
& 57.61\% & 61.20\%\up
& 74.00\% & 79.00\%\up
& 88.10\% & 88.50\%\up
& 25.25\% & 28.77\%\up
& 75.99\% & 80.38\%\up \\

Qwen3-4B
& 81.77\% & 82.92\%\up
& 60.29\% & 62.51\%\up
& 76.20\% & 78.40\%\up
& 89.90\% & 91.00\%\up
& 20.38\% & 28.66\%\up
& 78.18\% & 81.14\%\up \\

Qwen3-8B
& 82.92\% & 83.57\%\up
& 60.42\% & 62.90\%\up
& 79.60\% & 79.00\%\down
& 91.30\% & 91.70\%\up
& 19.29\% & 28.17\%\up
& 79.00\% & 81.07\%\up \\

Qwen3.5-4B
& 83.19\% & 83.46\%\up
& 61.20\% & 62.90\%\up
& 79.60\% & 81.40\%\up
& 92.00\% & 92.80\%\up
& 27.46\% & 29.15\%\up
& 79.59\% & 80.76\%\up \\

Qwen3.5-9B
& 84.55\% & 83.46\%\down
& 62.83\% & 64.27\%\up
& 79.80\% & 80.20\%\up
& 92.20\% & 92.10\%\down
& 26.80\% & 29.33\%\up
& 80.79\% & 81.67\%\up \\

\bottomrule
\end{tabular}
}
\end{table*}

\textbf{$\Sigma$-Mem improves over the corresponding Qwen base model on most (27/30) OOD cases}. The improvement is especially pronounced on BBH, where all center models benefit substantially from \textbf{$\Sigma$-Mem}. For example, Qwen3-4B improves from 20.38\% to 28.66\%, and Qwen3-8B improves from 19.29\% to 28.17\%. These results indicate that the learned competence directions and peer-reliability evidence provide transferable guidance for peer selection on unseen task distributions.

\subsection{Generalization Across MAS Selection Mechanisms}
\label{sec:coordination-role}
In this section, we show that the proposed reliability memory generalizes across selection mechanisms, the different ways the MAS consumes reliability evidence, such as routing a task to a peer or weighted voting among peers.

We evaluate the direct readouts of Sec.~\ref{sec:method:direct_readouts} on the same OOD datasets and three fixed peers described in Sec.~\ref{sec:OOD_Task}, using all five Qwen central models. As a memory-free control, we use majority voting, which assigns the same weight to every peer and selects the answer supported by the largest number of peers:
\begin{equation}
\label{eq:direct_majority}
\widehat{a}_t^{\,\mathrm{Maj}}
=
\arg\max_a
\sum_{p:\operatorname{canon}(a_{p,t})=a}
\frac{1}{P}.
\end{equation}
Majority voting uses neither $\mathbf{M}_p$ nor a central model utility score. It only determines whether different peers produce the same canonical answer. For all methods, the correctness labels of the current event are accessed only after the predictions have been made. These labels are then used to update the memory for subsequent events. This comparison isolates whether the historical competence evidence stored in $\mathbf{M}_p$ provides reliable evidence beyond unweighted peer agreement. Table~\ref{tab:coordination_roles} reports overall accuracy; per-benchmark results are in Appendix~\ref{appendix:validation}.

\begin{table}[!h]
\centering
\caption{Overall accuracy of different selection mechanisms across five central models
and six OOD benchmarks. \texttt{M-Route} directly selects the peer with the
highest memory-derived reliability score, whereas \texttt{M-Vote} uses these
scores to weight peer votes during answer aggregation. Both methods rely only
on memory readouts and do not require the central model to evaluate peer
responses at decision time. The lower block reports model-independent routing
and voting baselines.}
\label{tab:coordination_roles}
\setlength{\tabcolsep}{4pt}
\renewcommand{\arraystretch}{1.1}

\resizebox{\linewidth}{!}{
\begin{tabular}{
@{}l
*{4}{>{\centering\arraybackslash}p{1.55cm}}
@{}
}
\toprule
\multirow{2}{*}{Central Model}
& \multicolumn{2}{c}{Selection (responses)}
& Routing
& Voting \\
\cmidrule(lr){2-3}
\cmidrule(lr){4-4}
\cmidrule(lr){5-5}
& Base
& $\Sigma$ w/ G
& M Route
& M Vote \\
\midrule

Qwen3-0.6B
& 56.52\%
& 59.89\%
& 60.67\%
& \textbf{60.93\%} \\

Qwen3-4B
& 55.98\%
& 60.54\%
& 60.68\%
& \textbf{60.96\%} \\

Qwen3-8B
& 56.16\%
& 60.50\%
& 60.67\%
& \textbf{60.99\%} \\

Qwen3.5-4B
& 59.56\%
& \textbf{60.87\%}
& 60.67\%
& 60.86\% \\

Qwen3.5-9B
& 60.04\%
& \textbf{61.31\%}
& 60.69\%
& 60.86\% \\

\midrule
Random peer (expected; uniform selection)
& & & 54.85\% & -- \\

Majority voting~\citep{wang2022self}
& & & -- & 59.12\% \\

% Beta reputation (decayed success rate; \citealp{josang2002beta})
% & & & 60.68\% & -- \\

Best fixed peer (overall best peer; \citealp{cesa2006prediction})
& & & 57.52\% & -- \\

Oracle reputation (per-subtest best peer; cf.~\citealp{dawid1979maximum})
& & & 60.79\% & -- \\

\bottomrule
\end{tabular}
}
\end{table}

\begin{figure}[!h]
\vspace{-1em}
\centering
\includegraphics[width=0.9\linewidth]{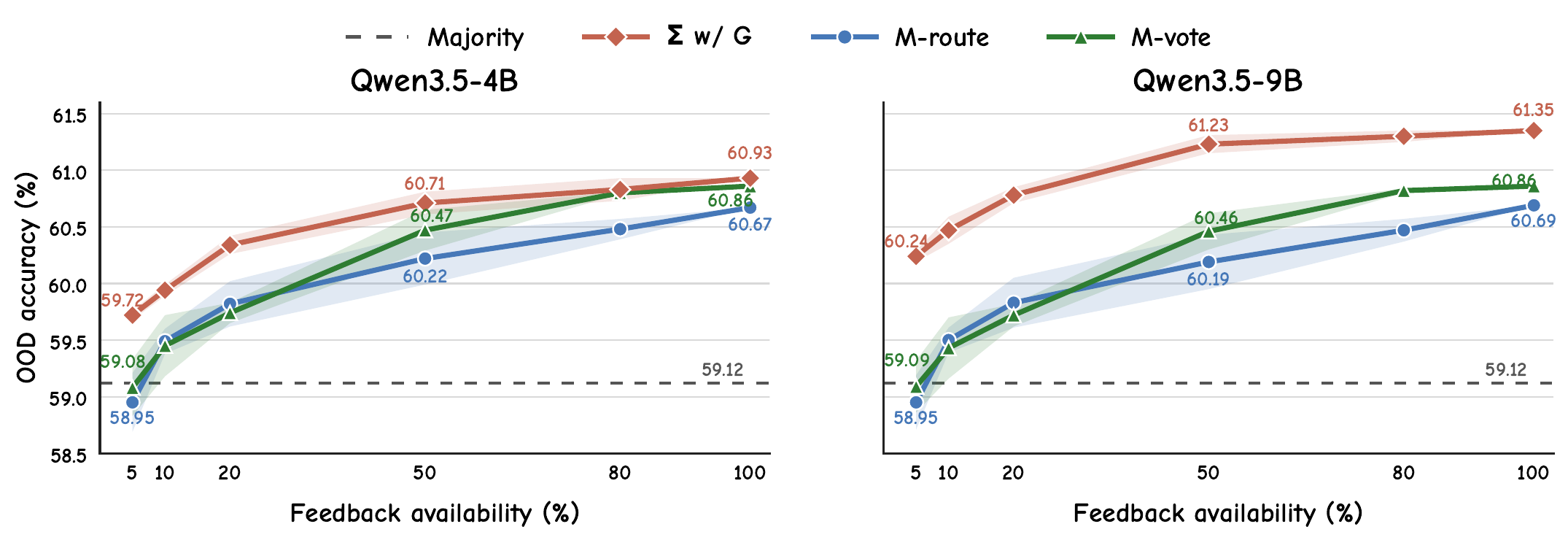}
\caption{OOD accuracy under different feedback availability ratios, with Qwen3.5-4B and Qwen3.5-9B as the central models. We vary the fraction of events whose correctness feedback is revealed after prediction, from 5\% to 100\%. Unlabeled events still apply temporal decay, which makes low-feedback settings strictly harder. The \textcolor{blue}{blue curve} denotes $\mathbf{M}$-based routing, the \textcolor{darkred}{red curve} denotes $\Sigma$-Mem w/ $\mathbf{G}$, and the \textcolor{gray}{gray dashed line} denotes majority voting. Shaded regions indicate variation across three random feedback masks.}
\vspace{-0.8em}
\label{fig:feedback_availability}
\end{figure}

As shown in Table~\ref{tab:coordination_roles}, directly using the
\textbf{\emph{historical competence evidence}} for simple routing and weighted
voting substantially improves OOD performance, outperforming both majority
voting and the best fixed peer baselines over the entire dataset. This demonstrates that
\textbf{$\Sigma$-Mem is not tied to a particular aggregation procedure; rather, it
provides a reusable reliability memory that can support different selection mechanisms}. The comparison also reveals the complementary roles of memory and
the central model. While \texttt{M-Vote} relies only on historical reliability,
residual steering combines this reliability information with the central model's assessment of the current responses. For the Qwen3 models, \texttt{M-Vote} performs better, suggesting that direct memory readout is more
effective when response-level assessment is limited. In contrast, with the
stronger Qwen3.5 models, $\Sigma$ w/G achieves the best overall accuracy, indicating that a more capable central model can better integrate current response evidence with historical reliability guidance.
The counterfactual streams complete this picture (Appendix~\ref{appendix:role}). When the reliability pattern is stable, as in CF@0 and CF@90, M-route matches or exceeds $\Sigma$ w/G for most central models. When history is ambiguous, as in CF@50, M-route degrades the most and the central model's response-level judgment becomes essential.

We further evaluate how memory performance changes as the availability of correctness feedback increases (Figure~\ref{fig:feedback_availability}). During the OOD test stream, each method first makes its prediction using the current memory state. After prediction, ground-truth correctness labels are revealed only for a randomly selected subset of events, with feedback availability varied from 5\% to 100\%. Events without feedback do not write new correctness evidence into memory and only apply temporal decay to the existing state. For each feedback ratio, we repeat the experiment with three random seeds. The results show that both memory-based routing and $\Sigma$-Mem improve as more feedback becomes available, indicating that the \textbf{memory state $\mathbf{M}$ records useful signals from the OOD stream that help peer selection}. Moreover, for Qwen 3.5, $\Sigma$-Mem consistently outperforms $\mathbf{M}$-route, suggesting that the central model’s content-based understanding provides complementary decision evidence beyond direct memory readout. This indicates that \textbf{by leveraging this inherent judgment capability, $\Sigma$-Mem further
improves overall accuracy.}
\section{Related Work}
\textbf{Computational Trust and Reputation in MAS.}
Maintaining per-agent trust from interaction outcomes has a long history in MAS. TRAVOS~\citep{teacy2006travos} estimates probabilistic trust from interaction track records; and
FIRE~\citep{Huynh2004FIREAI} integrates multiple trust sources, including
role-based and certified reputation; and REGRET~\citep{sabater2001regret}
incorporates social and contextual dimensions of reputation. 
This line established trust as a first-class signal for agent coordination, primarily with statistical or rule-based representations over predefined interaction categories.

% \textbf{Competence-aware Answer Aggregation.}
% The answers can be weighted by estimated annotator competence~\citep{dawid1979maximum} or conditioned by annotator ability on item type~\citep{whitehill2009whose}.
% In online learning with expert advice, Weighted Majority~\citep{littlestone1994weighted} maintains multiplicative reliability weights with regret guarantees, and the specialists framework~\citep{freund1997using} extends this to experts that are competent only on subsets of instances, which is an early form of task-conditioned reliability.

\textbf{LLM Routing and Ensembling.}
In the LLM era, competence estimates are used to select or combine models. RouteLLM~\citep{hallgarten2025routellm} and FrugalGPT~\citep{chen2023frugalgpt} train routers that dispatch queries to models by predicted quality and cost.
Zooter~\citep{lu2024routing} distills reward signals into a query-level router. Moreover, LLM-Blender~\citep{jiang2023llm} ranks and fuses candidate outputs.
For multi-agent settings, ReConcile~\citep{chen2024reconcile} performs confidence-weighted voting among heterogeneous LLM peers across discussion rounds, and GPTSwarm~\citep{zhuge2024gptswarm} formulates agent collectives as graphs and optimizes inter-agent edge weights from utility feedback.
Beyond mechanism design, recent studies examine trust dynamics in LLM-based MAS directly: LLM agents are prone to conformity, abandoning correct answers under peer pressure in social interactions~\citep{song2025llmscanthandlepeer}, and epistemic context learning~\citep{zhou2026ecl} studies how agents should calibrate trust in peers from interaction context. 

\textbf{Memory for LLM Agents.}
A parallel line equips agents with persistent memory.
MemGPT~\citep{packer2023memgpt} manages memory as a virtual context hierarchy; MemoryBank~\citep{zhong2024memorybank} and Mem0~\citep{chhikara2025mem0} maintain long-term records across sessions. Meanwhile, DeltaMem~\citep{deltamem2026} and proactive memory agents~\citep{wu2026remember} decide what to retain and when remembered state should re-enter the context during long-horizon execution. These systems concern the persistence and reactivation of \emph{content}.

\section{Conclusion}
We introduced \textbf{$\Sigma$-Mem}, an online reliability memory that records \textbf{\emph{historical competence evidence}} for individual peers and \textbf{\emph{peer relationship evidence}} across the peer set. Its symmetric state design supports stable online updates, while its general write-and-read interface allows the same reliability memory to guide residual steering, peer routing, and weighted voting without retraining the underlying models. Taken
together, our results show that reliability memory is not merely an auxiliary signal for response aggregation but a reusable coordination state that transfers across peer sets, task distributions, and selection mechanisms. They also reveal an important boundary: when the observed reliability evidence is ambiguous, as in CF@50, persistent memory provides limited guidance. Developing adaptive readouts that determine how historical reliability should interact with current decision evidence is therefore a promising direction. Overall, $\Sigma$-Mem establishes reliability memory as a foundation for adaptive coordination in long-horizon multi-agent systems.

% This is the theory part for bounds.
\bibliographystyle{plainnat}
\bibliography{citations}
\clearpage
\appendix

\section{Theoretical Analysis}\label{appendix:theoretical}
\subsection{Proof of Weyl's Inequality}
\label{appendix:weyl}
We provide a proof of the eigenvalue perturbation bound used in
Eq.~\ref{eq:weyl}. Let $\mathbf{M},\mathbf{E}\in\mathbb{R}^{r\times r}$
be real symmetric matrices, and order the eigenvalues of any real symmetric
matrix $\mathbf{A}$ as
\[
\lambda_1(\mathbf{A})
\ge
\lambda_2(\mathbf{A})
\ge
\cdots
\ge
\lambda_r(\mathbf{A}).
\]

According to the Courant--Fischer variational principle
~\citep{bhatia1997matrix}, the $i$-th eigenvalue of $\mathbf{A}$ satisfies
\begin{equation}
\label{eq:courant_fischer}
\lambda_i(\mathbf{A})
=
\min_{\substack{
\mathcal{S}\subseteq\mathbb{R}^{r}\\
\dim(\mathcal{S})=r-i+1
}}
\;
\max_{\substack{
\mathbf{v}\in\mathcal{S}\\
\lVert\mathbf{v}\rVert_2=1
}}
\mathbf{v}^{\top}\mathbf{A}\mathbf{v}.
\end{equation}

For every unit vector $\mathbf{v}$, the definition of the spectral norm gives
\begin{equation}
\label{eq:rayleigh_perturbation_bound}
\left|
\mathbf{v}^{\top}\mathbf{E}\mathbf{v}
\right|
\le
\lVert\mathbf{v}\rVert_2
\lVert\mathbf{E}\mathbf{v}\rVert_2
\le
\lVert\mathbf{E}\rVert_2.
\end{equation}
Therefore,
\begin{align}
\lambda_i(\mathbf{M}+\mathbf{E})
&=
\min_{\substack{
\mathcal{S}\subseteq\mathbb{R}^{r}\\
\dim(\mathcal{S})=r-i+1
}}
\;
\max_{\substack{
\mathbf{v}\in\mathcal{S}\\
\lVert\mathbf{v}\rVert_2=1
}}
\mathbf{v}^{\top}
(\mathbf{M}+\mathbf{E})
\mathbf{v}
\nonumber\\
&\le
\min_{\substack{
\mathcal{S}\subseteq\mathbb{R}^{r}\\
\dim(\mathcal{S})=r-i+1
}}
\;
\max_{\substack{
\mathbf{v}\in\mathcal{S}\\
\lVert\mathbf{v}\rVert_2=1
}}
\left(
\mathbf{v}^{\top}\mathbf{M}\mathbf{v}
+
\lVert\mathbf{E}\rVert_2
\right)
\nonumber\\
&=
\lambda_i(\mathbf{M})
+
\lVert\mathbf{E}\rVert_2.
\label{eq:weyl_upper_bound}
\end{align}

To obtain the reverse bound, we write
$\mathbf{M}=(\mathbf{M}+\mathbf{E})+(-\mathbf{E})$ and apply
Eq.~\ref{eq:weyl_upper_bound} again:
\begin{align}
\lambda_i(\mathbf{M})
&\le
\lambda_i(\mathbf{M}+\mathbf{E})
+
\lVert-\mathbf{E}\rVert_2
\nonumber\\
&=
\lambda_i(\mathbf{M}+\mathbf{E})
+
\lVert\mathbf{E}\rVert_2.
\end{align}
Hence,
\begin{equation}
\lambda_i(\mathbf{M})
-
\lVert\mathbf{E}\rVert_2
\le
\lambda_i(\mathbf{M}+\mathbf{E})
\le
\lambda_i(\mathbf{M})
+
\lVert\mathbf{E}\rVert_2.
\end{equation}
Combining the two inequalities yields
\begin{equation}
\label{eq:weyl_proof_result}
\left|
\lambda_i(\mathbf{M}+\mathbf{E})
-
\lambda_i(\mathbf{M})
\right|
\le
\lVert\mathbf{E}\rVert_2,
\qquad
i=1,\ldots,r,
\end{equation}
which proves Eq.~\eqref{eq:weyl}.

\subsection{Proof of Theorem~\ref{thm:sigma-mem-snr-main}}
\label{appendix:proof_of_theorem1}
This subsection proves Theorem~\ref{thm:sigma-mem-snr-main} from the main text. The proof proceeds in three steps: (1) a single-step spectral
perturbation bound, (2) a general signal-noise decomposition of the projected memory under an arbitrary sequence of task directions, and (3) specializing this decomposition to persistent alignment with a fixed competence direction and combining it with the variational characterization of eigenvalues to conclude the proof.

\textbf{Step 1: Single-Step Spectral Perturbation Bound}
\begin{proposition}[Bounded perturbation per event]
\label{prop:single-step-weyl}
For every peer $p$ and step $t$, every eigenvalue of $\mathbf{M}_p^{(t+1)}$
satisfies
\begin{equation}
    \bigl|\lambda_i\bigl(\mathbf{M}_p^{(t+1)}\bigr)
    - \gamma\,\lambda_i\bigl(\mathbf{M}_p^{(t)}\bigr)\bigr|
    \;\leq\; \eta.
\end{equation}
\end{proposition}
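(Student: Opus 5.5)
We apply Weyl's inequality (Eq.~\eqref{eq:weyl}) directly to the update rule Eq.~\eqref{eq:sigma_mem_update}. Set $\mathbf{A}=\gamma\mathbf{M}_p^{(t)}$ and $\mathbf{E}=\eta\,c_{p,t}\,\boldsymbol{\phi}(\mathbf{x}_t)\boldsymbol{\phi}(\mathbf{x}_t)^\top$, so that $\mathbf{M}_p^{(t+1)}=\mathbf{A}+\mathbf{E}$. Both matrices are real symmetric. $\mathbf{A}$ is symmetric by the inductive argument already given after Eq.~\eqref{eq:sigma_mem_update}, since $\mathbf{M}_p^{(0)}=\mathbf{0}$ and each update adds a symmetric rank-one term. $\mathbf{E}$ is a scalar multiple of an outer product of a vector with itself. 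Weyl's inequality, proved in Appendix~\ref{appendix:weyl}, therefore gives $|\lambda_i(\mathbf{A}+\mathbf{E})-\lambda_i(\mathbf{A})|\le\lVert\mathbf{E}\rVert_2$ for every $i$.

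Two small facts then finish the argument.

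\textbf{Eigenvalues of $\mathbf{A}$.} We need $\lambda_i(\gamma\mathbf{M}_p^{(t)})=\gamma\lambda_i(\mathbf{M}_p^{(t)})$ under the decreasing ordering $\lambda_1\ge\cdots\ge\lambda_r$. This holds because $\gamma>0$: scaling by a positive constant multiplies every eigenvalue by $\gamma$ and does not change their order. Equivalently, the Courant--Fischer expression in Eq.~\eqref{eq:courant_fischer} is positively homogeneous. The sign of $\gamma$ is the one place where care is needed. If $\gamma$ were negative, the ordering would reverse and the index-$i$ matching would break; the hypothesis $\gamma\in(0,1)$ rules this out.

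\textbf{Norm of $\mathbf{E}$.} Since $\boldsymbol{\phi}\boldsymbol{\phi}^\top$ is a rank-one positive semidefinite matrix, its only nonzero eigenvalue is $\lVert\boldsymbol{\phi}(\mathbf{x}_t)\rVert_2^2=1$, so $\lVert\boldsymbol{\phi}\boldsymbol{\phi}^\top\rVert_2=1$. Combined with $|c_{p,t}|=1$ and $\eta>0$, this gives $\lVert\mathbf{E}\rVert_2=\eta$.

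Substituting both facts into Weyl's bound yields $|\lambda_i(\mathbf{M}_p^{(t+1)})-\gamma\lambda_i(\mathbf{M}_p^{(t)})|\le\eta$ for all $i$, which is the claim. The argument uses no probabilistic assumption, so it holds for every peer $p$ and every step $t$. The same computation shows the bound would read $\eta|c|$ if the labels were allowed to take values in a general bounded set, as in the noisy model of Theorem~\ref{thm:sigma-mem-snr-main}.
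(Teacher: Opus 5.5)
Your proof is correct and follows the paper's argument: write $\mathbf{M}_p^{(t+1)}=\gamma\mathbf{M}_p^{(t)}+\mathbf{E}_t$, note $\lVert\mathbf{E}_t\rVert_2=\eta$, and apply Weyl's inequality. Your explicit check that $\lambda_i(\gamma\mathbf{M}_p^{(t)})=\gamma\lambda_i(\mathbf{M}_p^{(t)})$ for $\gamma>0$ fills in a step the paper leaves implicit. One small correction to your closing aside: in the noisy model of Theorem~\ref{thm:sigma-mem-snr-main} the labels remain exactly $\pm 1$ (only their decomposition into $\mu_p+\xi_t$ is new), so the bound there is still $\eta$.
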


\begin{proof}
Write $\mathbf{M}_p^{(t+1)} = \gamma\mathbf{M}_p^{(t)} + \mathbf{E}_t$ with $\mathbf{E}_t = \eta\,c_{p,t}\,\boldsymbol{\phi}(\mathbf{x}_t)
\boldsymbol{\phi}(\mathbf{x}_t)^\top$. Since $\|\boldsymbol{\phi}(\mathbf{x}_t)\|_2=1$ and $\mathbf{E}_t$ is rank one, its only nonzero eigenvalue is $\eta\,c_{p,t}\in\{-\eta,\eta\}$, so $\|\mathbf{E}_t\|_2 = \eta$. Both $\gamma\mathbf{M}_p^{(t)}$ and $\mathbf{E}_t$ are symmetric, so Weyl's inequality (Eq.~\ref{eq:weyl}) applies directly with $\mathbf{M} = \gamma\mathbf{M}_p^{(t)}$ and $\mathbf{E} = \mathbf{E}_t$.
\end{proof}
This confirms that no single event, however extreme, can move any eigenvalue of the (decayed) memory by more than $\eta$. The remaining steps show what this buys us over a long horizon.

\textbf{Step 2: A General Signal-Noise Decomposition}
\begin{assumption}[Task-conditional competence model]
\label{assumption:competence-model}
Fix a unit reference direction $\boldsymbol{\phi}^\star\in\mathbb{R}^r$ (e.g., a task type of interest) and write $\psi_t = \bigl(\boldsymbol{\phi}(\mathbf{x}_t)^\top \boldsymbol{\phi}^\star\bigr)^2 \in [0,1]$ for the squared alignment of the event-$t$ task direction with $\boldsymbol{\phi}^\star$. The correctness label decomposes as
\begin{equation}
    c_{p,t} \;=\; \mu_p(\mathbf{x}_t) \;+\; \xi_t,
\end{equation}
where $\mu_p(\mathbf{x}_t) \in [-1,1]$ is peer $p$'s (unknown, task-dependent) expected competence and $\xi_{0}, \xi_1, \dots$ are independent, zero-mean noise terms with $\xi_t \in [-2,2]$ almost surely.
\end{assumption}

This is the minimal model needed to separate ``true, task-dependent competence'' from ``event-level noise'' while keeping $c_{p,t}\in\{-1,1\}$ exactly, since $\xi_t = c_{p,t} - \mu_p(\mathbf{x}_t)$ automatically lies in $[-2,2]$.

\begin{definition}[Projected memory signal]
For a fixed unit direction $\boldsymbol{\phi}^\star$, define the scalar projection
\begin{equation}
    s_T \;=\; \boldsymbol{\phi}^{\star\top}\mathbf{M}_p^{(T)}
    \boldsymbol{\phi}^\star.
\end{equation}
\end{definition}

\begin{lemma}[Signal accumulates, noise stays bounded]
\label{lem:signal-noise}
Under Assumption~\ref{assumption:competence-model}, with $\mathbf{M}_p^{(0)}=\mathbf{0}$, the projected memory signal decomposes as
\begin{equation}
    s_T \;=\; \underbrace{\eta\sum_{t=0}^{T-1}\gamma^{T-1-t}\,
    \psi_t\,\mu_p(\mathbf{x}_t)}_{\textstyle S_T
    \;(\text{deterministic signal})}
    \;+\;
    \underbrace{\eta\sum_{t=0}^{T-1}\gamma^{T-1-t}\,\psi_t\,\xi_t}
    _{\textstyle N_T \;(\text{noise})}.
\end{equation}
The noise term satisfies, for every $\varepsilon>0$,
\begin{equation}
    \Prob\bigl(|N_T| \geq \varepsilon\bigr)
    \;\leq\;
    2\exp\!\left(-\,\frac{\varepsilon^2(1-\gamma^2)}{8\eta^2}\right),
    \label{eq:noise-concentration}
\end{equation}
so that, with probability at least $1-\delta$,
\begin{equation}
    |N_T| \;\leq\; \eta\sqrt{\frac{8\log(2/\delta)}{1-\gamma^2}}
    \qquad \text{for every } T,
    \label{eq:noise-bound}
\end{equation}
uniformly in the horizon $T$.
\end{lemma}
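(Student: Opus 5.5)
The plan is to unroll the linear recursion Eq.~\eqref{eq:sigma_mem_update} explicitly, project it onto $\boldsymbol{\phi}^\star$, and then apply Hoeffding's inequality to the resulting weighted sum of independent bounded noise terms.

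\textbf{Step A: unrolling and decomposition.} Starting from $\mathbf{M}_p^{(0)}=\mathbf{0}$, a direct induction on $T$ gives
\[
\mathbf{M}_p^{(T)}=\eta\sum_{t=0}^{T-1}\gamma^{T-1-t}\,c_{p,t}\,\boldsymbol{\phi}(\mathbf{x}_t)\boldsymbol{\phi}(\mathbf{x}_t)^\top .
\]
The inductive step simply multiplies the previous sum by $\gamma$ and appends the new rank-one term. Sandwiching with $\boldsymbol{\phi}^\star$ and using $\boldsymbol{\phi}^{\star\top}\boldsymbol{\phi}(\mathbf{x}_t)\boldsymbol{\phi}(\mathbf{x}_t)^\top\boldsymbol{\phi}^\star=\psi_t$ gives $s_T=\eta\sum_t\gamma^{T-1-t}\psi_t c_{p,t}$. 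Substituting $c_{p,t}=\mu_p(\mathbf{x}_t)+\xi_t$ from Assumption~\ref{assumption:competence-model} and splitting the sum yields exactly $S_T+N_T$. This part is purely algebraic.

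\textbf{Step B: concentration of the noise.} I would treat the task sequence $(\mathbf{x}_t)$, and hence the weights, as fixed. Formally, I would condition on the task sequence and assume the $\xi_t$ remain independent and zero-mean given it. Then $N_T=\sum_t a_t\xi_t$ with deterministic weights $a_t=\eta\gamma^{T-1-t}\psi_t\ge 0$. Each summand is zero-mean, independent, and lies in an interval of length $4a_t$, since $\xi_t\in[-2,2]$. Hoeffding's inequality then gives
\[
\Prob(|N_T|\ge\varepsilon)\le 2\exp\!\Bigl(-\tfrac{2\varepsilon^2}{\sum_t 16a_t^2}\Bigr)=2\exp\!\Bigl(-\tfrac{\varepsilon^2}{8\sum_t a_t^2}\Bigr).
\]
Since $\psi_t\le 1$, we have $\sum_t a_t^2\le\eta^2\sum_{k\ge0}\gamma^{2k}=\eta^2/(1-\gamma^2)$. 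This gives Eq.~\eqref{eq:noise-concentration}, with a bound independent of $T$. Setting the right-hand side equal to $\delta$ and solving gives $\varepsilon=\eta\sqrt{8\log(2/\delta)/(1-\gamma^2)}$, which is Eq.~\eqref{eq:noise-bound}.

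\textbf{Main obstacle: two modelling subtleties.} The algebra is routine; the delicate points are these.

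\emph{Independence of weights and noise.} The weights $\psi_t$ depend on $\mathbf{x}_t$, and $\xi_t$ is defined through $c_{p,t}$ and $\mu_p(\mathbf{x}_t)$. Hoeffding therefore needs the noise to be independent conditionally on the task stream. I would state this as the intended reading of Assumption~\ref{assumption:competence-model}. Alternatively, if the $\xi_t$ are only a martingale difference sequence with predictable weights, I would use Azuma--Hoeffding, which gives the identical constant.

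\emph{Meaning of ``for every $T$''.} The argument yields, for each fixed $T$, an event of probability at least $1-\delta$ on which the bound holds. Crucially, the threshold is the same for all $T$. I would present the statement in this per-horizon sense, which is all that Theorem~\ref{thm:sigma-mem-snr-main} requires. A simultaneous-in-$T$ guarantee would instead need a union bound over horizons, e.g.\ allocating $\delta_T=\delta/(T(T+1))$ at horizon $T$, at the cost of a $\log T$ factor. I would remark on this rather than claim it.
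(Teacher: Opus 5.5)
Your proposal is correct and follows the paper's proof essentially step for step: unroll the recursion from $\mathbf{M}_p^{(0)}=\mathbf{0}$, project onto $\boldsymbol{\phi}^\star$ to split $s_T$ into $S_T+N_T$, apply Hoeffding with ranges of length $4\eta\gamma^{T-1-t}\psi_t$, bound the weight sum by $1/(1-\gamma^2)$, and invert at level $\delta$. Both of your caveats are sound refinements that the paper leaves implicit. First, its proof also treats the $\psi_t$ as fixed, which amounts to conditioning on the task stream. Second, its inversion likewise gives only the per-horizon bound with a $T$-independent threshold, not a guarantee holding simultaneously for all $T$ on a single event of probability $1-\delta$.
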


\begin{proof}
Unrolling the recursion of Eq.~\ref{eq:sigma_mem_update} from
$\mathbf{M}_p^{(0)}=\mathbf{0}$ gives
$\mathbf{M}_p^{(T)} = \eta\sum_{t=0}^{T-1}\gamma^{T-1-t}c_{p,t}\,
\boldsymbol{\phi}(\mathbf{x}_t)\boldsymbol{\phi}(\mathbf{x}_t)^\top$.
Projecting onto $\boldsymbol{\phi}^\star$ and substituting
$c_{p,t}=\mu_p(\mathbf{x}_t)+\xi_t$ gives the stated decomposition, using
$\bigl(\boldsymbol{\phi}(\mathbf{x}_t)^\top\boldsymbol{\phi}^\star
\bigr)^2=\psi_t$.

For the concentration bound, write $N_T = \sum_{t=0}^{T-1} Y_t$ with
$Y_t = \eta\gamma^{T-1-t}\psi_t\,\xi_t$. The $Y_t$ are independent
(since the $\xi_t$ are), each has mean zero, and
$Y_t \in \bigl[-2\eta\gamma^{T-1-t}\psi_t,\,
2\eta\gamma^{T-1-t}\psi_t\bigr]$, a bounded range of length
$4\eta\gamma^{T-1-t}\psi_t$. Hoeffding's inequality for sums of
independent bounded random variables gives
\begin{equation}
    \Prob(|N_T|\geq\varepsilon)
    \;\leq\;
    2\exp\!\left(
      -\frac{2\varepsilon^2}
      {\sum_{t=0}^{T-1}\bigl(4\eta\gamma^{T-1-t}\psi_t\bigr)^2}
    \right)
    =
    2\exp\!\left(
      -\frac{\varepsilon^2}
      {8\eta^2\sum_{t=0}^{T-1}\gamma^{2(T-1-t)}\psi_t^2}
    \right).
\end{equation}
Since $\psi_t\in[0,1]$,
$\sum_{t=0}^{T-1}\gamma^{2(T-1-t)}\psi_t^2
\leq \sum_{k=0}^{T-1}\gamma^{2k} \leq \frac{1}{1-\gamma^2}$,
which gives Eq.~\ref{eq:noise-concentration}. Inverting the tail bound at
level $\delta$ gives Eq.~\ref{eq:noise-bound}.
\end{proof}
Lemma~\ref{lem:signal-noise} makes the qualitative claim in Section~\ref{sec:method:sigma_state} precise: however long the memory
runs, the noise contribution to any fixed direction is bounded by $O\bigl(\eta/\sqrt{1-\gamma^2}\bigr)$ with high probability, which does
not grow with $T$. Whether $\mathbf{M}_p$ becomes a reliable competence signal along $\boldsymbol{\phi}^\star$ therefore depends entirely on
whether the deterministic signal $S_T$ grows to exceed this fixed noise floor, which is exactly what persistent alignment guarantees, as we show next.

\textbf{Step 3: Persistent Alignment and Proof of Theorem~\ref{thm:sigma-mem-snr-main}}
\begin{corollary}[Stationary signal-to-noise ratio under persistent alignment]
\label{cor:snr}
Suppose the task direction is persistently aligned with $\boldsymbol{\phi}^\star$ ($\psi_t=1$ for all $t$) and peer $p$ has a
constant competence bias $\mu_p(\mathbf{x}_t)=\mu_p$ in this direction. Then the deterministic signal satisfies
\begin{equation}
    S_T = \eta\mu_p\,\frac{1-\gamma^T}{1-\gamma}
    \;\xrightarrow[T\to\infty]{}\;
    S_\infty = \frac{\eta\mu_p}{1-\gamma},
\end{equation}
while the noise bound of Eq.~\ref{eq:noise-bound} stabilizes at
$O\bigl(\eta/\sqrt{1-\gamma^2}\bigr)$. The resulting stationary
signal-to-noise ratio is
\begin{equation}
    \mathrm{SNR}(\gamma)
    \;=\;
    \frac{S_\infty}{\eta/\sqrt{1-\gamma^2}}
    \;=\;
    \mu_p\sqrt{\frac{1+\gamma}{1-\gamma}},
\end{equation}
which is strictly increasing in $\gamma$ and diverges as $\gamma\to 1^-$.
\end{corollary}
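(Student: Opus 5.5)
The plan is to specialize Lemma~\ref{lem:signal-noise} to the persistent-alignment regime and then do elementary algebra on the two resulting terms. First, set $\psi_t=1$ and $\mu_p(\mathbf{x}_t)=\mu_p$ for all $t$ in the deterministic part of the decomposition. This gives
\[
S_T=\eta\mu_p\sum_{t=0}^{T-1}\gamma^{T-1-t}=\eta\mu_p\sum_{k=0}^{T-1}\gamma^{k}.
\]
Summing the finite geometric series yields $S_T=\eta\mu_p(1-\gamma^T)/(1-\gamma)$. Since $\gamma\in(0,1)$, we have $\gamma^T\to0$, so $S_T\to S_\infty=\eta\mu_p/(1-\gamma)$.

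Second, for the noise I will not redo any concentration argument. I will simply invoke Eq.~\ref{eq:noise-bound}: with probability at least $1-\delta$, $|N_T|\le\eta\sqrt{8\log(2/\delta)/(1-\gamma^2)}$ uniformly in $T$. For fixed $\delta$ this is $O(\eta/\sqrt{1-\gamma^2})$ and independent of $T$. With $\psi_t=1$ the variance proxy $\sum_{k<T}\gamma^{2k}$ increases to $1/(1-\gamma^2)$, so this bound is the stationary noise floor. For the SNR I normalise by $\eta/\sqrt{1-\gamma^2}$, dropping the $\delta$-dependent constant as the statement does. The ratio is then
\[
\frac{\eta\mu_p}{1-\gamma}\cdot\frac{\sqrt{(1-\gamma)(1+\gamma)}}{\eta}=\mu_p\sqrt{\frac{1+\gamma}{1-\gamma}},
\]
using the factorisation $1-\gamma^2=(1-\gamma)(1+\gamma)$.

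Third, for monotonicity I will differentiate $h(\gamma)=(1+\gamma)/(1-\gamma)$ and get $h'(\gamma)=2/(1-\gamma)^2>0$. Hence $\sqrt{h}$ is strictly increasing on $(0,1)$, and $h(\gamma)\to\infty$ as $\gamma\to1^-$. The one real subtlety is the sign of $\mu_p$. The claim ``strictly increasing and diverges'' holds as stated for $\mu_p>0$. For $\mu_p<0$ the same computation shows that $|\mathrm{SNR}|$ increases and diverges, with the ratio tending to $-\infty$. For $\mu_p=0$ there is no signal at all. I will state the result with this caveat, reading SNR in magnitude when $\mu_p\neq0$.

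Finally, to close the link to Theorem~\ref{thm:sigma-mem-snr-main}, I will use Courant--Fischer (Eq.~\ref{eq:courant_fischer} with $i=1$). It gives $\lambda_{\max}(\mathbf{M}_p^{(T)})\ge\boldsymbol{\phi}^{\star\top}\mathbf{M}_p^{(T)}\boldsymbol{\phi}^\star=s_T=S_T+N_T\ge S_T-|N_T|$. Substituting the closed form of $S_T$ and the high-probability bound on $|N_T|$ yields the theorem's inequality. None of these steps is a genuine obstacle; the only place requiring care is the sign convention on $\mu_p$ in the monotonicity claim.
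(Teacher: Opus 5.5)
Your proposal is correct and follows the paper's proof essentially step for step: specialize Lemma~\ref{lem:signal-noise} to $\psi_t\equiv 1$ and $\mu_p(\mathbf{x}_t)\equiv\mu_p$, sum the geometric series, take the scale $\eta/\sqrt{1-\gamma^2}$ of Eq.~\ref{eq:noise-bound} (dropping the $\delta$-dependent factor) as the noise floor, divide, and use that $(1+\gamma)/(1-\gamma)$ is increasing. Your caveat on the sign of $\mu_p$ is a valid minor sharpening, since the paper's conclusion that the ratio is strictly increasing and tends to $+\infty$ silently requires $\mu_p>0$.
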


\begin{proof}
Immediate from Lemma~\ref{lem:signal-noise} with $\psi_t\equiv 1$ and
$\mu_p(\mathbf{x}_t)\equiv\mu_p$: $S_T$ is a geometric series summing to
$\eta\mu_p(1-\gamma^T)/(1-\gamma)$, which tends to $\eta\mu_p/(1-\gamma)$
as $T\to\infty$. The stationary noise scale follows by taking
$T\to\infty$ in Eq.~\ref{eq:noise-bound}, where the bound (up to the
$\sqrt{\log(2/\delta)}$ factor) is $\eta/\sqrt{1-\gamma^2}$. Dividing
gives $\mathrm{SNR}(\gamma) = \frac{\eta\mu_p/(1-\gamma)}
{\eta/\sqrt{1-\gamma^2}} = \mu_p\sqrt{(1-\gamma^2)}/(1-\gamma)
= \mu_p\sqrt{(1+\gamma)/(1-\gamma)}$, which is increasing in $\gamma$ on
$[0,1)$ since $(1+\gamma)/(1-\gamma)$ is increasing, and tends to
$+\infty$ as $\gamma\to 1^-$.
\end{proof}
\begin{remark}[Memory-length/adaptivity trade-off]
Corollary~\ref{cor:snr} shows that increasing $\gamma$ strictly improves the stationary signal-to-noise ratio under a constant competence bias $\mu_p$.
This trade-off is one-sided in our model because
Assumption~\ref{assumption:competence-model} treats $\mu_p$ as
time-invariant; if a peer's competence drifts over time, a larger
$\gamma$ also slows how quickly $\mathbf{M}_p$ forgets outdated evidence and adapts to the new competence level.
\end{remark}

It remains to connect the scalar projection $s_T$ back to the spectrum of $\mathbf{M}_p^{(T)}$ itself, via the variational characterization of eigenvalues, which completes the proof of Theorem~\ref{thm:sigma-mem-snr-main}.

\begin{proof}[Proof of Theorem~\ref{thm:sigma-mem-snr-main}]
Since $\mathbf{M}_p^{(T)}$ is symmetric, the Rayleigh--Ritz theorem gives
\begin{equation}
    \lambda_{\max}\bigl(\mathbf{M}_p^{(T)}\bigr)
    \;\geq\;
    \boldsymbol{\phi}^{\star\top}\mathbf{M}_p^{(T)}\boldsymbol{\phi}^\star
    \;=\; s_T
\end{equation}
for the fixed unit vector $\boldsymbol{\phi}^\star$. By
Lemma~\ref{lem:signal-noise} and Eq.~\ref{eq:noise-bound},
$s_T = S_T + N_T \geq S_T - \eta\sqrt{8\log(2/\delta)/(1-\gamma^2)}$ with
probability at least $1-\delta$. Substituting $S_T$ from
Corollary~\ref{cor:snr} gives
\begin{equation}
    \lambda_{\max}\bigl(\mathbf{M}_p^{(T)}\bigr)
    \;\geq\;
    \eta\mu_p\,\frac{1-\gamma^T}{1-\gamma}
    \;-\;
    \eta\sqrt{\frac{8\log(2/\delta)}{1-\gamma^2}}
\end{equation}
with probability at least $1-\delta$, which is the bound stated in
Theorem~\ref{thm:sigma-mem-snr-main}. The noise term is independent of
$T$ by construction, and the signal term saturates at
$\eta\mu_p/(1-\gamma)$ with stationary signal-to-noise ratio
$\mu_p\sqrt{(1+\gamma)/(1-\gamma)}$, increasing in $\gamma$, by
Corollary~\ref{cor:snr}. This proves all parts of the theorem.
\end{proof}
Thus, once $T$ is large enough that $\eta\mu_p(1-\gamma^T)/(1-\gamma)$ exceeds the noise floor $\eta\sqrt{8\log(2/\delta)/(1-\gamma^2)}$, equivalently, once $\mathrm{SNR}(\gamma)\sqrt{1-\gamma^{2T}}$ exceeds $\sqrt{8\log(2/\delta)}$, the leading eigenvalue of $\mathbf{M}_p$ is, with high probability, bounded away from zero by an amount reflecting peer $p$'s true competence $\mu_p$ in that direction, and not merely an artifact of event-level noise.

\section{Dataset Construction}
\label{appendix:dataset}
\subsection{Training Dataset}
\label{appendix:training_dataset}

We train \textbf{$\Sigma$-Mem} on a fixed offline dataset containing 2,963 task events from three competence domains: mathematics, retrieval-augmented question answering, and coding. Specifically, the dataset contains 1,000 examples from the
GSM8K~\citep{cobbe2021training} training split, 1,000 examples from the SQuAD~\citep{rajpurkar2016squad} training split, and 963 examples fromthe APPS~\citep{hendrycks2021measuring} training split. For each task, we independently generate one response from each of three fixed peer models: Gemma-3-4B-it~\citep{gemmateam2025gemma3technicalreport}, Phi-4-mini-instruct~\citep{abouelenin2025phi}, and Qwen2.5-Coder-7B-Instruct~\citep{yang2025qwen3}. Each peer response is then evaluated to obtain a binary correctness label used for event-level memory updates. Table~\ref{tab:training_peer_accuracy} summarizes the correctness of each peer on the training dataset.

\begin{table*}[!h]
\centering
\caption{Peer correctness on the training dataset. Each entry reports accuracy. The best performing peer on each dataset is highlighted  in \textbf{bold}.}
\label{tab:training_peer_accuracy}
\setlength{\tabcolsep}{7pt}
\renewcommand{\arraystretch}{1.15}

\begin{tabular}{l|c|ccc}
\Xhline{1.2pt}
\rowcolor{orange!15}
\textbf{Dataset}
& \textbf{N}
& Gemma-3-4B-it
& Phi-4-mini
& Qwen2.5-Coder \\
\Xhline{1pt}

GSM8K~\citep{cobbe2021training}
& 1,000
& \textbf{81.00\%}
& 61.90\%
& 71.00\% \\

SQuAD~\citep{rajpurkar2016squad}
& 1,000
& 12.80\%
& \textbf{83.70\%}
& 50.90\% \\

APPS~\citep{hendrycks2021measuring}
& 963
& \textbf{25.86\%}
& 8.10\%
& 17.65\% \\

\hline

Overall
& 2,963
& 40.06\%
& \textbf{51.77\%}
& 46.88\% \\

\Xhline{1.2pt}
\end{tabular}
\end{table*}

\subsection{Counterfactual Attack Dataset}
\label{appendix:counterfactual_dataset}

We construct a counterfactual attack dataset from nine benchmarks
covering mathematics, retrieval-augmented question answering (RAG), and coding.
The mathematics subset is drawn from
MATH-500~\citep{hendrycks2021measuring,lightman2024let},
AMC23~\citep{yang2024qwen2},
OlympiadBench~\citep{he2024olympiadbench}, and
CollegeMath~\citep{tang2024mathscale};
the RAG subset is drawn from
HotpotQA~\citep{yang2018hotpotqa} and
TriviaQA~\citep{joshi2017triviaqa};
and the coding subset is drawn from
HumanEval~\citep{chen2021evaluating},
MBPP~\citep{austin2021program}, and
LiveCodeBench~\citep{jain2025livecodebench}.
Each counterfactual split contains the same 2,685 tasks, including
1,216 mathematics tasks, 1,075 RAG tasks, and 394 coding tasks. 

For each constituent dataset, we first measure the clean accuracy of the three
peers and designate the best-performing model as the strong peer. An event is
eligible for counterfactual construction if the strong peer answers correctly
and at least one of the remaining peers answers incorrectly. We then swap the
complete response and correctness label of the strong peer with those of one
incorrect weak peer, while keeping the peer identities and metadata unchanged.
The attack ratio $r$ denotes the proportion of eligible events that are swapped
within each dataset. We evaluate
$r\in\{0\%,50\%,70\%,90\%\}$, where CF@0 is the clean baseline. All four splits
share the same underlying task events. Finally, we pool the nine datasets and
apply a deterministic shuffle to form a single mixed-domain evaluation stream.

\begin{table*}[!h]
\centering
\caption{Peer accuracy under different counterfactual attack ratios. All splits
contain the same 2,685 task events. The best performing peer in each split is
highlighted in \textbf{bold}.}
\label{tab:counterfactual_peer_accuracy}
\setlength{\tabcolsep}{7pt}
\renewcommand{\arraystretch}{1.15}

\begin{tabular}{l|c|ccc}
\Xhline{1.2pt}
\rowcolor{orange!15}
\textbf{CF Ratio}
& \textbf{N}
& Gemma-3-4B-it
& Phi-4-mini
& Qwen2.5-Coder \\
\Xhline{1pt}

CF@0
& 2,685
& \textbf{55.68\%}
& 46.93\%
& 48.42\% \\

CF@50
& 2,685
& 44.69\%
& 43.43\%
& \textbf{62.91\%} \\

CF@70
& 2,685
& 40.30\%
& 41.08\%
& \textbf{69.65\%} \\

CF@90
& 2,685
& 35.87\%
& 39.37\%
& \textbf{75.79\%} \\

\Xhline{1.2pt}
\end{tabular}
\end{table*}

To evaluate generalization to larger peer sets, we extend the original
three-peer evaluation stream with two unseen peer models,
Llama-3.2-3B-Instruct~\citep{grattafiori2024llama} and BitCPM-CANN-3B~\citep{team2025minicpm4}. We construct a four-peer setting by
adding Llama-3.2-3B-Instruct and a five-peer setting by further adding
BitCPM-CANN-3B, without modifying the learned parameters of $\Sigma$-Mem. Both
settings use the same 2,685 task events, domain composition, and counterfactual
attack ratios as the original three-peer evaluation. The peer accuracies for the
two enlarged peer pools are reported in
Tables~\ref{tab:four_peer_accuracy} and~\ref{tab:five_peer_accuracy}.

\begin{table*}[t]
\centering
\caption{Peer accuracy under different counterfactual attack ratios in the
four-peer setting. All splits contain the same 2,685 task events. The
best-performing peer in each split is highlighted in \textbf{bold}.}
\label{tab:four_peer_accuracy}
\setlength{\tabcolsep}{7pt}
\renewcommand{\arraystretch}{1.15}

\begin{tabular}{l|c|cccc}
\Xhline{1.2pt}
\rowcolor{orange!15}
\textbf{CF Ratio}
& \textbf{N}
& Gemma-3-4B-it
& Phi-4-mini
& Qwen2.5-Coder
& Llama-3.2t \\
\Xhline{1pt}

CF@0
& 2,685
& \textbf{55.68\%}
& 46.93\%
& 48.42\%
& 38.70\% \\

CF@50
& 2,685
& 40.45\%
& 37.02\%
& \textbf{57.06\%}
& 55.20\% \\

CF@70
& 2,685
& 34.30\%
& 33.00\%
& 60.56\%
& \textbf{61.86\%} \\

CF@90
& 2,685
& 28.27\%
& 29.01\%
& 63.95\%
& \textbf{68.49\%} \\

\Xhline{1.2pt}
\end{tabular}
\end{table*}

\begin{table*}[!h]
\centering
\caption{Peer accuracy under different counterfactual attack ratios in the
five-peer setting. All splits contain the same 2,685 task events. The
best-performing peer in each split is highlighted in \textbf{bold}.}
\label{tab:five_peer_accuracy}
\setlength{\tabcolsep}{5pt}
\renewcommand{\arraystretch}{1.15}

\begin{tabular}{l|c|ccccc}
\Xhline{1.2pt}
\rowcolor{orange!15}
\textbf{CF Ratio}
& \textbf{N}
& Gemma-3-4B-it
& Phi-4-mini
& Qwen2.5-Coder
& Llama-3.2
& BitCPM-CANN-3B \\
\Xhline{1pt}

CF@0
& 2,685
& \textbf{55.68\%}
& 46.93\%
& 48.42\%
& 38.70\%
& 46.78\% \\

CF@50
& 2,685
& 38.29\%
& 34.56\%
& 53.45\%
& 51.17\%
& \textbf{59.03\%} \\

CF@70
& 2,685
& 31.36\%
& 29.53\%
& 55.53\%
& 56.09\%
& \textbf{63.99\%} \\

CF@90
& 2,685
& 24.36\%
& 24.47\%
& 57.54\%
& 61.04\%
& \textbf{69.09\%} \\

\Xhline{1.2pt}
\end{tabular}
\end{table*}

\subsection{Generalization Beyond Training Domains}
\label{appendix:ood_datasets}

Although $\Sigma$-Mem is trained only on mathematics, RAG, and coding events, it can
map an unseen task into the learned competence-direction space through
$\boldsymbol{\phi}(\mathbf{x})$. This allows historical reliability evidence from
related training domains to be reused for new tasks. To evaluate this capability,
we test the three fixed peers on out-of-distribution (OOD) benchmarks covering
physical commonsense reasoning (PIQA~\citep{bisk2020piqa}), broad-domain knowledge understanding
(MMLU~\citep{hendrycks2020measuring}), science question answering (OpenBookQA~\citep{mihaylov2018can} and SciQ~\citep{welbl2017crowdsourcing}), complex reasoning
(BBH~\citep{suzgun2023challenging}), and natural-language understanding (SuperGLUE~\citep{wang2019superglue}).
Table~\ref{tab:ood_peer_accuracy} summarizes peer accuracy across these unseen
domains.

\begin{table*}[!h]
\centering
\caption{Peer accuracy on benchmarks beyond the training domains. The SuperGLUE
result is computed by pooling its six evaluated subsets by instance count. The
best-performing peer on each benchmark is highlighted in bold.}
\label{tab:ood_peer_accuracy}
\setlength{\tabcolsep}{7pt}
\renewcommand{\arraystretch}{1.15}

\begin{tabular}{l|c|ccc}
\Xhline{1.2pt}
\rowcolor{orange!15}
\textbf{Dataset}
& \textbf{N}
& Gemma-3-4B-it
& Phi-4-mini
& Qwen2.5-Coder \\
\Xhline{1pt}

PIQA~\citep{bisk2020piqa}
& 1,838
& 76.99\%
& 80.14\%
& \textbf{84.44\%} \\

MMLU~\citep{hendrycks2020measuring}
& 1,531
& 54.08\%
& \textbf{64.08\%}
& 56.17\% \\

OpenBookQA~\citep{mihaylov2018can}
& 500
& 72.80\%
& 76.40\%
& \textbf{79.80\%} \\

SciQ~\citep{welbl2017crowdsourcing}
& 1,000
& 87.10\%
& 90.10\%
& \textbf{92.10\%} \\

BBH~\citep{suzgun2023challenging}
& 6,511
& \textbf{28.67\%}
& 10.69\%
& 12.78\% \\

SuperGLUE~\citep{wang2019superglue}
& 6,023
& 77.47\%
& 79.15\%
& \textbf{80.67\%} \\

\Xhline{1.2pt}
\end{tabular}
\end{table*}

\section{Direct Memory Readout under Counterfactual Streams}
\label{appendix:role}
To further illustrate how recorded historical competence evidence works for peer selection, we conduct a separation experiment on the CF streams. For each counterfactual ratio, we directly read out the peer-specific memory matrices and perform \textbf{M-route}: the current problem is encoded into a competence direction $\boldsymbol{\phi}(\mathbf{x}_t)$, each peer receives a reliability score $s_{p,t}=\boldsymbol{\phi}(\mathbf{x}_t)^\top \mathbf{M}^{(t)}_p\boldsymbol{\phi}(\mathbf{x}_t)$, and the peer with the largest score is selected. This route does not use the peer responses during selection; the correctness labels are only used afterward to update the memory.

We additionally compare against a conventional \textbf{Bayesian Beta B1} reputation baseline. B1 maintains one global posterior $\operatorname{Beta}(\alpha_p,\beta_p)$ for each peer. Starting from $\operatorname{Beta}(1,1)$, we first warm the posteriors on the same 2,963-event mixed training set used by \textbf{$\Sigma$-Mem}. For each CF ratio, we initialize B1 from this mixed-trained state, select the peer with the largest posterior mean $\alpha_p/(\alpha_p+\beta_p)$, and then update all peers after the current correctness feedback is revealed:
\begin{equation}
\alpha_p \leftarrow \gamma\alpha_p+r_p,
\qquad
\beta_p \leftarrow \gamma\beta_p+(1-r_p),
\end{equation}
where $r_p\in\{0,1\}$ and $\gamma=0.9$. B1 is response-blind and model-independent: it receives neither the current question nor any peer response, and uses only accumulated correctness feedback. The results are shown in Table~\ref{tab:CF_Mresults}.

\begin{table*}[!h]
\centering
\caption{Counter Factual (CF) Attack results under different CF@ratios. \texttt{Acc} denotes the task accuracy obtained using the selected peer response, and \texttt{p1(Gemma)/p2(Phi)/p3(Qwen-Coder)} reports selection counts for the three peers. \texttt{M-route} directly selects peers from the question-derived memory readout without observing peer responses. Bayesian Beta B1 instead uses a mixed-trained global correctness posterior and is model-independent, so it is reported once in the final row. We \textcolor{darkred}{\textbf{highlight}} the best result within each center-model block and CF@ratio.}
\label{tab:CF_Mresults}
\setlength{\tabcolsep}{3pt}
\renewcommand{\arraystretch}{1.15}
\resizebox{\textwidth}{!}{
\begin{tabular}{lcccccccc}
\toprule
\multirow{2}{*}{Model}
& \multicolumn{2}{c}{CF@0}
& \multicolumn{2}{c}{CF@50}
& \multicolumn{2}{c}{CF@70}
& \multicolumn{2}{c}{CF@90} \\
\cmidrule(lr){2-3}
\cmidrule(lr){4-5}
\cmidrule(lr){6-7}
\cmidrule(lr){8-9}
& Acc & p1/p2/p3
& Acc & p1/p2/p3
& Acc & p1/p2/p3
& Acc & p1/p2/p3 \\
\midrule

Qwen3-0.6B
& 61.01\% & 1837/192/656
& 52.89\% & 1828/222/635
& 50.50\% & 1838/228/619
& 46.22\% & 1830/229/626 \\

\enspace + $\Sigma$ w/o G
& 64.17\% & 1801/355/529
& 55.38\% & 617/511/1557
& 61.19\% & 261/714/1710
& 68.64\% & 82/968/1635 \\

\enspace + $\Sigma$ w/ G
& 63.87\% & 1817/347/521
& 56.42\% & 614/396/1675
& 62.72\% & 286/518/1881
& 71.10\% & 81/611/1993 \\

\enspace + M-route
& \textcolor{darkred}{\textbf{65.51\%}} & 1102/1063/520
& \textcolor{darkred}{\textbf{57.88\%}} & 471/428/1786
& \textcolor{darkred}{\textbf{66.67\%}} & 251/310/2124
& \textcolor{darkred}{\textbf{76.91\%}} & 84/342/2259 \\

\midrule

Qwen3-4B
& 61.71\% & 930/339/1416
& \textcolor{darkred}{\textbf{65.55\%}} & 930/429/1326
& 65.77\% & 944/460/1281
& 67.86\% & 955/482/1248 \\

\enspace + $\Sigma$ w/o G
& 63.35\% & 1434/589/662
& 61.15\% & 451/390/1844
& 66.41\% & 238/363/2084
& 72.74\% & 129/351/2205 \\

\enspace + $\Sigma$ w/ G
& 64.51\% & 1290/542/853
& 63.61\% & 447/344/1894
& \textcolor{darkred}{\textbf{67.64\%}} & 265/300/2120
& 73.30\% & 143/287/2255 \\

\enspace + M-route
& \textcolor{darkred}{\textbf{64.58\%}} & 1122/1036/527
& 56.69\% & 470/411/1804
& 65.40\% & 241/301/2143
& \textcolor{darkred}{\textbf{75.42\%}} & 82/335/2268 \\

\midrule

Qwen3-8B
& 66.89\% & 1068/572/1045
& \textcolor{darkred}{\textbf{67.67\%}} & 1006/573/1106
& 66.26\% & 1008/566/1111
& 66.07\% & 985/545/1155 \\

\enspace + $\Sigma$ w/o G
& 67.75\% & 1392/624/669
& 62.53\% & 535/547/1603
& 64.80\% & 307/552/1826
& 68.34\% & 227/571/1887 \\

\enspace + $\Sigma$ w/ G
& 67.67\% & 1384/558/743
& 63.76\% & 500/462/1723
& 66.33\% & 230/493/1962
& 69.16\% & 129/494/2062 \\

\enspace + M-route
& \textcolor{darkred}{\textbf{68.75\%}} & 1123/1050/512
& 57.91\% & 451/409/1825
& \textcolor{darkred}{\textbf{66.52\%}} & 258/269/2158
& \textcolor{darkred}{\textbf{75.72\%}} & 100/261/2324 \\

\midrule

Qwen3.5-4B
& 67.15\% & 1360/438/887
& \textcolor{darkred}{\textbf{66.15\%}} & 1271/526/888
& 64.69\% & 1258/563/864
& 63.87\% & 1211/565/909 \\

\enspace + $\Sigma$ w/o G
& 70.99\% & 1745/583/357
& 60.89\% & 463/385/1837
& 66.89\% & 179/457/2049
& 73.15\% & 70/696/1919 \\

\enspace + $\Sigma$ w/ G
& \textcolor{darkred}{\textbf{72.77\%}} & 1619/596/470
& 63.17\% & 451/331/1903
& \textcolor{darkred}{\textbf{68.60\%}} & 182/314/2189
& 75.46\% & 71/288/2326 \\

\enspace + M-route
& 71.99\% & 1236/956/493
& 57.39\% & 435/390/1860
& 67.49\% & 225/246/2214
& \textcolor{darkred}{\textbf{76.72\%}} & 84/218/2383 \\

\midrule

Qwen3.5-9B
& 66.48\% & 1041/582/1062
& \textcolor{darkred}{\textbf{68.94\%}} & 1144/528/1013
& 69.65\% & 1196/495/994
& 70.99\% & 1193/481/1011 \\

\enspace + $\Sigma$ w/o G
& 71.81\% & 1771/514/400
& 62.83\% & 469/359/1857
& 69.24\% & 210/343/2132
& 75.46\% & 137/329/2219 \\

\enspace + $\Sigma$ w/ G
& 71.84\% & 1629/498/558
& 65.96\% & 459/316/1910
& \textcolor{darkred}{\textbf{71.40\%}} & 194/284/2207
& \textcolor{darkred}{\textbf{76.95\%}} & 95/214/2376 \\

\enspace + M-route
& \textcolor{darkred}{\textbf{77.36\%}} & 1266/1005/414
& 56.20\% & 446/412/1827
& 67.08\% & 280/291/2114
& 74.79\% & 163/276/2246 \\

\midrule

Bayesian Beta B1
& 51.43\% & 1397/667/621
& 58.18\% & 328/282/2075
& 67.26\% & 96/130/2459
& 75.23\% & 31/28/2626 \\

\bottomrule
\end{tabular}
}
\end{table*}

The results reflect two factors: the judgment ability of the central model and the validity of the historical signal in the stream.
For the Qwen3 center models, whose response-level judgment is weaker, M-route matches or exceeds $\Sigma$ w/G in 9 of 12 settings, with margins up to 5.81 points (Qwen3-0.6B, CF@90) and 6.56 points (Qwen3-8B, CF@90). $\Sigma$-Mem must pass its reliability evidence through the central model's judgment, while M-route does not, so a weak judge dilutes the benefit of the memory. For the stronger Qwen3.5 backbones, the comparison flips: $\Sigma$ w/G surpasses M-route in 6 of 8 settings, and where M-route remains ahead, the margin shrinks. For example, at CF@90 the M-route advantage is 1.26 points for Qwen3.5-4B, compared with 2.12 and 6.56 points for Qwen3-4B and Qwen3-8B at the same split. A stronger central model therefore converts the same memory state into better decisions by adding reliable response-level judgment on top of it.
The main exception across all central models is CF@50, where M-route falls to 56.20\%--57.91\% and lies below $\Sigma$ w/G for every central model larger than 0.6B. \textbf{This is not a failure of the memory mechanism but its faithful behavior: the CF@50 stream is ambiguous by construction, since half of the history supports the original peer preference while the other half supports the counterfactual shift, and a faithful memory cannot produce confident guidance from evidence that contains none.} \textbf{Therefore, using historical reliability evidence to guide peer selection can become misleading when the stream itself provides ambiguous reliability evidence.} This reinforces our interpretation that \textbf{$\Sigma$-Mem integrates the historical evidence it receives, and motivates decision rules that balance memory-based reliability with the center model's current response-level judgment}. Indeed, at CF@50 that response-level judgment is the only informative signal, which explains why $\Sigma$ w/G retains a clear advantage over M-route for every central model above 0.6B.

The Bayesian comparison provides a complementary control. Beta B1 obtains 51.43\%, 58.18\%, 67.26\%, and 75.23\% from CF@0 to CF@90. At CF@0, $\Sigma$ w/G exceeds B1 by 12.44--21.34 points across all center models, showing the limitation of representing heterogeneous mathematics, RAG, and coding reliability with only one global posterior per peer. At CF@50, $\Sigma$ w/G also exceeds B1 for four of the five center models. As the counterfactual shift becomes dominant, B1 adapts by selecting Qwen-Coder 2,459 times at CF@70 and 2,626 times at CF@90, making it competitive with both M-route and $\Sigma$-Mem. Nevertheless, the Qwen3.5 variants with $\Sigma$ w/G remain above B1 at both CF@70 and CF@90. These results show that a global Bayesian success-rate estimator is strong when one peer becomes globally dominant, while task-conditioned memory and current-response judgment are more effective when reliability varies across tasks or the global history is ambiguous.

Overall, memory-only routing is preferable when the judge is weak or the reliability pattern in the stream is stable, while response-based judgment becomes essential when history is ambiguous, and its value grows with the strength of the central model. The Beta B1 comparison further shows that the gains of $\Sigma$-Mem cannot be reduced to conventional global Bayesian reputation tracking.

% $$
%   S(a)=\sum_{p:\operatorname{canon}(a_p)=a}\rho_p,
%   \qquad
%   \hat a=\arg\max_a S(a)
% $$

\section{SuperGLUE breakdown}
The aggregate SuperGLUE result in Table~\ref{tab:ood_peer_accuracy} pools the six
evaluated validation subsets according to their number of instances.
Table~\ref{tab:superglue_peer_breakdown} reports peer accuracy on each subset
separately.

\begin{table*}[!h]
\centering
\caption{Peer accuracy on the six evaluated SuperGLUE subsets. The
best-performing peer on each subset is highlighted in bold.}
\label{tab:superglue_peer_breakdown}
\setlength{\tabcolsep}{7pt}
\renewcommand{\arraystretch}{1.15}

\begin{tabular}{l|c|ccc}
\Xhline{1.2pt}
\rowcolor{orange!15}
\textbf{SuperGLUE Subset}
& \textbf{N}
& Gemma-3-4B-it
& Phi-4-mini
& Qwen2.5-Coder \\
\Xhline{1pt}

WiC
& 638
& 54.70\%
& \textbf{62.70\%}
& 59.56\% \\

RTE
& 277
& 54.15\%
& 30.32\%
& \textbf{88.81\%} \\

CB
& 56
& 48.21\%
& 5.36\%
& \textbf{82.14\%} \\

COPA
& 100
& 93.00\%
& 93.00\%
& \textbf{97.00\%} \\

WSC
& 104
& 37.50\%
& 43.27\%
& \textbf{47.12\%} \\

MultiRC
& 4,848
& 82.67\%
& \textbf{85.44\%}
& 83.35\% \\

\Xhline{1.2pt}
\end{tabular}
\end{table*}

\begin{figure}[!h]
\vspace{-1.0em}
\centering
\includegraphics[width=0.9\linewidth]{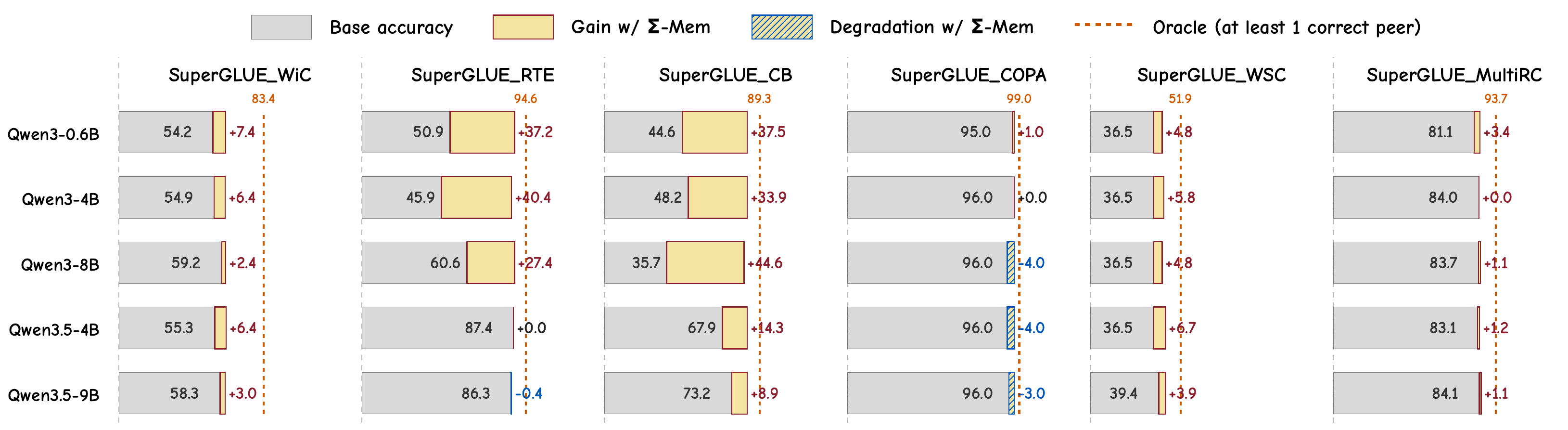}
\caption{SuperGLUE dataset breakdown across six subtasks. Each horizontal bar decomposes performance into the \textbf{\textcolor{gray}{base model accuracy}} and the change introduced by \textbf{$\Sigma$-Mem}: positive gains are shown in \textcolor{darkred}{red}, while degradations are shown in 
\textcolor{blue}{blue}. The \textcolor{orange}{orange dashed line} denotes the oracle accuracy obtained when at least one peer provides a correct answer, which is the upper bound of accuracy.}
\label{fig:superglue}
\end{figure}

As shown in Figure~\ref{fig:superglue}, \textbf{$\Sigma$-Mem improves aggregation accuracy in most cases}. The gains are particularly large for the Qwen3 models on SuperGLUE-RTE and SuperGLUE-CB, where the base center models fail to identify the correct peer, while historical competence and peer-relationship evidence substantially improve selection. A small degradation appears in a few cases where the base model is already highly accurate, as historical steering can slightly alter an otherwise correct judgment. Nevertheless, after applying \textbf{$\Sigma$-Mem}, performance across nearly all cases moves closer to the oracle upper bound, demonstrating its overall effectiveness for more reliable peer aggregation.

\section{Validation of Historical Competence in $\mathbf{M}$}
\label{appendix:validation}
This section reports the per-benchmark results for the direct-readout experiments of Sec.~\ref{sec:coordination-role}. The decision rules are defined in Sec.~\ref{sec:method:direct_readouts}.

We evaluate this diagnostic on the same OOD datasets and three fixed peers described in Sec.~\ref{appendix:ood_datasets} using
Qwen3-0.6B, Qwen3-4B, Qwen3-8B, Qwen3.5-4B, and Qwen3.5-9B as central models. These comparisons isolate the contribution of task-conditioned historical competence evidence itself provides reliable signals. The experimental results are as follows.

\begin{table*}[!h]
\centering
\caption{\textbf{\textcolor{Blue}{Qwen 3- 0.6B}}: Diagnostic results of directly reading historical
competence evidence from $\mathbf{M}_p$. The best-performing result on each benchmark is highlighted in \textbf{bold}.}
\label{tab:m_readout_qwen06b}
\setlength{\tabcolsep}{7pt}
\renewcommand{\arraystretch}{1.15}

\begin{tabular}{l|c|ccccc}
\Xhline{1.2pt}
\rowcolor{orange!15}
\textbf{OOD Group}
& \textbf{N}
& \textbf{Base}
& \textbf{$\Sigma$ w/ G}
& \textbf{Majority}
& \textbf{M Route}
& \textbf{M Vote} \\
\Xhline{1pt}

PIQA
& 1,838
& 80.69\%
& 81.12\%
& \textbf{83.73\%}
& 82.64\%
& \textbf{83.73\%} \\

MMLU
& 1,531
& 57.61\%
& 61.20\%
& \textbf{62.12\%}
& 60.42\%
& 61.01\% \\

OpenBookQA
& 500
& 74.00\%
& 79.00\%
& 81.00\%
& 78.60\%
& \textbf{81.20\%} \\

SciQ
& 1,000
& 88.10\%
& 88.50\%
& 92.50\%
& 90.60\%
& \textbf{92.60\%} \\

BBH
& 6,511
& 25.25\%
& \textbf{28.77\%}
& 22.78\%
& 28.54\%
& 27.54\% \\

SuperGLUE
& 6,023
& 75.99\%
& 80.38\%
& 82.78\%
& 82.30\%
& \textbf{83.10\%} \\

\Xhline{1pt}

\textbf{Overall}
& \textbf{17,403}
& 56.52\%
& 59.89\%
& 59.12\%
& 60.67\%
& \textbf{60.93\%} \\

\Xhline{1.2pt}
\end{tabular}
\end{table*}

\begin{table*}[!h]
\centering
\caption{\textbf{\textcolor{Blue}{Qwen3-4B}}:
Diagnostic results of directly reading historical competence evidence from
$\mathbf{M}_p$. The best-performing result on each benchmark is highlighted
in \textbf{bold}.}
\label{tab:m_readout_qwen3_4b}
\setlength{\tabcolsep}{7pt}
\renewcommand{\arraystretch}{1.15}

\begin{tabular}{l|c|ccccc}
\Xhline{1.2pt}
\rowcolor{orange!15}
\textbf{OOD Group}
& \textbf{N}
& \textbf{Base}
& \textbf{$\Sigma$ w/ G}
& \textbf{Majority}
& \textbf{M Route}
& \textbf{M Vote} \\
\Xhline{1pt}

PIQA
& 1,838
& 81.77\%
& 82.92\%
& \textbf{83.73\%}
& 82.64\%
& \textbf{83.73\%} \\

MMLU
& 1,531
& 60.29\%
& \textbf{62.51\%}
& 62.12\%
& 60.55\%
& 61.20\% \\

OpenBookQA
& 500
& 76.20\%
& 78.40\%
& 81.00\%
& 78.40\%
& \textbf{81.20\%} \\

SciQ
& 1,000
& 89.90\%
& 91.00\%
& 92.50\%
& 90.60\%
& \textbf{92.60\%} \\

BBH
& 6,511
& 20.38\%
& \textbf{28.66\%}
& 22.78\%
& 28.52\%
& 27.65\% \\

SuperGLUE
& 6,023
& 78.18\%
& 81.14\%
& 82.78\%
& 82.35\%
& \textbf{83.03\%} \\

\Xhline{1pt}

\textbf{Overall}
& \textbf{17,403}
& 55.98\%
& 60.54\%
& 59.12\%
& 60.68\%
& \textbf{60.96\%} \\

\Xhline{1.2pt}
\end{tabular}
\end{table*}

\begin{table*}[!h]
\centering
\caption{\textbf{\textcolor{Blue}{Qwen3-8B}}:
Diagnostic results of directly reading historical competence evidence from
$\mathbf{M}_p$. The best-performing result on each benchmark is highlighted
in \textbf{bold}.}
\label{tab:m_readout_qwen3_8b}
\setlength{\tabcolsep}{7pt}
\renewcommand{\arraystretch}{1.15}

\begin{tabular}{l|c|ccccc}
\Xhline{1.2pt}
\rowcolor{orange!15}
\textbf{OOD Group}
& \textbf{N}
& \textbf{Base}
& \textbf{$\Sigma$ w/ G}
& \textbf{Majority}
& \textbf{M Route}
& \textbf{M Vote} \\
\Xhline{1pt}

PIQA
& 1,838
& 82.92\%
& 83.57\%
& 83.73\%
& 82.59\%
& \textbf{83.84\%} \\

MMLU
& 1,531
& 60.42\%
& \textbf{62.90\%}
& 62.12\%
& 60.61\%
& 61.33\% \\

OpenBookQA
& 500
& 79.60\%
& 79.00\%
& 81.00\%
& 78.40\%
& \textbf{81.20\%} \\

SciQ
& 1,000
& 91.30\%
& 91.70\%
& \textbf{92.50\%}
& 90.50\%
& \textbf{92.50\%} \\

BBH
& 6,511
& 19.29\%
& 28.17\%
& 22.78\%
& \textbf{28.54\%}
& 27.66\% \\

SuperGLUE
& 6,023
& 79.00\%
& 81.07\%
& 82.78\%
& 82.32\%
& \textbf{83.05\%} \\

\Xhline{1pt}

\textbf{Overall}
& \textbf{17,403}
& 56.16\%
& 60.50\%
& 59.12\%
& 60.67\%
& \textbf{60.99\%} \\

\Xhline{1.2pt}
\end{tabular}
\end{table*}

\begin{table*}[!h]
\centering
\caption{\textbf{\textcolor{Blue}{Qwen3.5-4B}}:
Diagnostic results of directly reading historical competence evidence from
$\mathbf{M}_p$. The best-performing result on each benchmark is highlighted
in \textbf{bold}.}
\label{tab:m_readout_qwen35_4b}
\setlength{\tabcolsep}{7pt}
\renewcommand{\arraystretch}{1.15}

\begin{tabular}{l|c|ccccc}
\Xhline{1.2pt}
\rowcolor{orange!15}
\textbf{OOD Group}
& \textbf{N}
& \textbf{Base}
& \textbf{$\Sigma$ w/ G}
& \textbf{Majority}
& \textbf{M Route}
& \textbf{M Vote} \\
\Xhline{1pt}

PIQA
& 1,838
& 83.19\%
& 83.46\%
& 83.73\%
& 82.70\%
& \textbf{83.79\%} \\

MMLU
& 1,531
& 61.20\%
& \textbf{62.90\%}
& 62.12\%
& 60.61\%
& 60.94\% \\

OpenBookQA
& 500
& 79.60\%
& \textbf{81.40\%}
& 81.00\%
& 78.20\%
& \textbf{81.40\%} \\

SciQ
& 1,000
& 92.00\%
& \textbf{92.80\%}
& 92.50\%
& 90.50\%
& 92.50\% \\

BBH
& 6,511
& 27.46\%
& \textbf{29.15\%}
& 22.78\%
& 28.44\%
& 27.32\% \\

SuperGLUE
& 6,023
& 79.59\%
& 80.76\%
& 82.78\%
& 82.38\%
& \textbf{83.13\%} \\

\Xhline{1pt}

\textbf{Overall}
& \textbf{17,403}
& 59.56\%
& \textbf{60.87\%}
& 59.12\%
& 60.67\%
& 60.86\% \\

\Xhline{1.2pt}
\end{tabular}
\end{table*}

\begin{table*}[!h]
\centering
\caption{\textbf{\textcolor{Blue}{Qwen3.5-9B}}:
Diagnostic results of directly reading historical competence evidence from
$\mathbf{M}_p$. The best-performing result on each benchmark is highlighted
in \textbf{bold}.}
\label{tab:m_readout_qwen35_9b}
\setlength{\tabcolsep}{7pt}
\renewcommand{\arraystretch}{1.15}

\begin{tabular}{l|c|ccccc}
\Xhline{1.2pt}
\rowcolor{orange!15}
\textbf{OOD Group}
& \textbf{N}
& \textbf{Base}
& \textbf{$\Sigma$ w/ G}
& \textbf{Majority}
& \textbf{M Route}
& \textbf{M Vote} \\
\Xhline{1pt}

PIQA
& 1,838
& \textbf{84.55\%}
& 83.46\%
& 83.73\%
& 82.75\%
& 83.79\% \\

MMLU
& 1,531
& 62.83\%
& \textbf{64.27\%}
& 62.12\%
& 60.94\%
& 61.14\% \\

OpenBookQA
& 500
& 79.80\%
& 80.20\%
& \textbf{81.00\%}
& 78.00\%
& \textbf{81.00\%} \\

SciQ
& 1,000
& 92.20\%
& 92.10\%
& \textbf{92.50\%}
& 90.40\%
& 92.40\% \\

BBH
& 6,511
& 26.80\%
& \textbf{29.33\%}
& 22.78\%
& 28.46\%
& 27.34\% \\

SuperGLUE
& 6,023
& 80.79\%
& 81.67\%
& 82.78\%
& 82.37\%
& \textbf{83.13\%} \\

\Xhline{1pt}

\textbf{Overall}
& \textbf{17,403}
& 60.04\%
& \textbf{61.31\%}
& 59.12\%
& 60.69\%
& 60.86\% \\

\Xhline{1.2pt}
\end{tabular}
\end{table*}

\end{document}